\documentclass[preprint]{imsart}

\usepackage{lineno,hyperref}

\usepackage{aliascnt}
\usepackage{amsmath}
\usepackage{amssymb}
\usepackage{amsthm}
\usepackage{bibentry}
\usepackage{bbm}
\usepackage{booktabs}
\usepackage{color}
\usepackage{enumerate}
\usepackage{float}
\usepackage{graphicx}
\usepackage{hyperref}
\usepackage{ifthen}
\usepackage{mathtools}
\usepackage[numbers]{natbib}
\usepackage[linesnumbered,vlined]{algorithm2e}
\usepackage{stmaryrd}
\usepackage{ushort}
\usepackage{xargs}

\RequirePackage[colorlinks,citecolor=blue,urlcolor=blue]{hyperref}

% provide arXiv number if available:
%\arxiv{arXiv:0000.0000}

% put your definitions there:
\startlocaldefs

\newtheorem{theorem}{Theorem}
\newaliascnt{proposition}{theorem}
\newtheorem{proposition}[proposition]{Proposition}
\aliascntresetthe{proposition}
\newaliascnt{lemma}{theorem}
\newtheorem{lemma}[lemma]{Lemma}
\aliascntresetthe{lemma}
%
%\newaliascnt{corollary}{theorem}
%\newtheorem{corollary}[corollary]{Corollary}
%\aliascntresetthe{corollary}
%
\newaliascnt{definition}{theorem}
\newtheorem{definition}[definition]{Definition}
\aliascntresetthe{definition}
\newaliascnt{example}{theorem}
\newtheorem{example}[example]{Example}
\aliascntresetthe{example}
\newaliascnt{remark}{theorem}
\newtheorem{remark}[remark]{Remark}
\aliascntresetthe{remark}

% A

\newcommandx\A[2][1=]{
\ifthenelse{\equal{#1}{}}
{\hspace{-1mm}(\textbf{A\ref{#2}})\hspace{-1mm}}
{\hspace{-1mm}(\textbf{A\ref{#1}--\ref{#2}})\hspace{-1mm}}
}

% B

\newcommandx\B[2][1=]{
\ifthenelse{\equal{#1}{}}
{\hspace{-1mm}(\textbf{S})\hspace{-1mm}}
{\hspace{-1mm}(\textbf{S\ref{#1}--\ref{#2}})\hspace{-1mm}}
}
%\newcommandx\B[2][1=]{
%\ifthenelse{\equal{#1}{}}
%{\hspace{-1mm}(\textbf{S\ref{#2}})\hspace{-1mm}}
%{\hspace{-1mm}(\textbf{S\ref{#1}--\ref{#2}})\hspace{-1mm}}
%}
\newcommand{\bd}{c}
\newcommand{\bias}[2]{\beta_{#2} \langle #1 \rangle} 
\newcommand{\biasfilt}[2]{\bar{\beta}_{#2} \langle #1 \rangle} 
\newcommand{\binset}[1]{\mathsf{I}_{#1}}
\newcommand{\Binsp}[1]{\mathsf{B}_{#1}}
\newcommand{\bmf}[1]{\mathbb{F}(#1)}

% C

\newcommand{\cat}{\mathsf{Cat}}
\newcommand{\chunk}[3]{{#1}_{#2}^{#3}}

% D 

\newcommand{\DDelta}[3]{\Delta_{#1}\langle #2\rangle(#3)}
\newcommand{\dlim}{\stackrel{\mathcal D}{\longrightarrow}}

% E

\newcommand{\E}{\mathbb{E}}
\newcommand{\ed}{g}
\newcommand{\Efd}{\mathcal{E}}
\newcommand{\enoch}[3]{E_{#1,#2}^{#3}}
\newcommand{\epart}[2]{\xi_{#1}^{#2}}
\newcommand{\eqdef}{\vcentcolon=}
\newcommand{\Esp}{\mathsf{E}}
\newcommandx{\eve}[3][1=]{\ifthenelse{\equal{#1}{}}{E_{#2}^{#3}}{E_{#1,#2}^{#3}}}

% F

\newcommandx{\filt}[1][1=]{\ifthenelse{\equal{#1}{}}{\filtsymb}{\filtsymb \langle #1 \rangle}}
\newcommand{\filtpart}[1][1=]{\ifthenelse{\equal{#1}{}}{\filtsymb_\N}{\filtsymb_\N \langle #1 \rangle}}
\newcommand{\filtsymb}{\bar{\eta}}
\newcommandx{\filtvariance}[3][1=,3=]{\bar{\sigma}^{#1}_{#3} \langle #2 \rangle}

% G

\newcommandx{\gen}[1][1=]{\ifthenelse{\equal{#1}{1}}{G}{G'}} 
\newcommand{\genkernel}{\kernel{K}}

% H

\newcommand{\hk}{\kernel{M}}

% I

\newcommand{\init}{\chi}
\newcommand{\ind}[2]{I_{#1}^{#2}}
\newcommand{\intvect}[2]{\llbracket #1, #2 \rrbracket}

% K 

\newcommand{\kernel}[1]{\mathbf{#1}}

% L

\newcommand{\lag}{\lambda}
\newcommand{\lagtime}[2]{#1(#2)}
\newcommandx{\likeli}[3][1=]{\pi_{#1} \langle #2 \rangle(#3)}
\newcommand{\limitfunc}[1]{\pi \langle #1 \rangle}

% M

\newcommand{\mdlow}{\ushort{\varepsilon}}
\newcommand{\mdup}{\bar{\varepsilon}}
\newcommand{\mdr}{\mathbb{M}}
\newcommand{\me}{\mathrm{e}}
\newcommand{\mk}{\kernel{M}}

\newcommand{\mumeas}[2]{\mu_{#1} \langle #2 \rangle}

% N

\newcommand{\N}{N}
\newcommand{\nset}{\mathbb{N}}
\newcommand{\nsetpos}{\mathbb{N}^\ast}

% O

\newcommand{\1}{\mathbbm{1}}
\newcommand{\ordo}{\mathcal{O}}

% P 

\newcommand{\partfd}[1]{\mathcal{F}_{#1}}
\newcommand{\per}{\zeta}
\newcommand{\perblock}{\bar{\zeta}}
\newcommand{\plim}{\stackrel{\prob}{\longrightarrow}}
\newcommandx{\pot}[1][1=]{\ifthenelse{\equal{#1}{}}{g}{g \langle #1 \rangle}}
\newcommand{\potlow}{\ushort{\delta}}
\newcommand{\potup}{\bar{\delta}}
\newcommand{\predsymb}{\eta}
\newcommandx{\pred}[1][1=]{\ifthenelse{\equal{#1}{}}{\predsymb}{\predsymb \langle #1 \rangle}}
\newcommand{\predpart}[1][1=]{\ifthenelse{\equal{#1}{}}{\predsymb_\N}{\predsymb_\N \langle #1 \rangle}}
\newcommand{\prob}{\mathbb{P}}
\newcommand{\probmeas}[1]{\mathbb{M}(#1)}
\newcommand{\probdoeblin}[2]{\mu_{#1} \langle #2 \rangle}
\newcommand{\rmd}{\mathrm{d}}

% R

\newcommand{\rate}{\rho}
\newcommand{\refm}{\nu}
\newcommand{\rset}{\mathbb{R}}
\newcommand{\rsetpos}{\mathbb{R}^\ast_+}

% S

% T

\newcommand{\term}[3][]{\upsilon_{#2,#3} \langle #1 \rangle}
%\newcommand{\thickhline}{%
%    \noalign {\ifnum 0=`}\fi \hrule height 2pt
%    \futurelet \reserved@a \@xhline
%}
%\newcolumntype{w}{@{\hskip\tabcolsep\vrule width 2pt\hskip\tabcolsep}}
%\makeatother

% U 

\newcommandx{\uk}[1][1=]{\ifthenelse{\equal{#1}{}}{\kernel{Q}}{\kernel{Q} \langle #1 \rangle}}
\newcommand{\unitstr}[2]{1_{#1}}
\newcommand{\unpredsymbol}{\gamma}
\newcommandx{\unpred}[1][1=]{\ifthenelse{\equal{#1}{}}{\unpredsymb}{\unpredsymb \langle #1 \rangle}}
\newcommand{\unpredpart}[1][1=]{\ifthenelse{\equal{#1}{}}{\unpredsymbol_\N}{\unpredsymbol_\N \langle #1 \rangle}}

% V 

\newcommandx{\varest}[3][1=,3=]{\ifthenelse{\equal{#3}{}}{\sigma^{#1}_\N \langle #2 \rangle}{\sigma^{#1}_{\N, #3} \langle #2 \rangle}}
\newcommandx{\varestfilt}[3][1=,3=]{\ifthenelse{\equal{#3}{}}{\bar{\sigma}^{#1}_\N \langle #2 \rangle}{\bar{\sigma}^{#1}_{\N, #3} \langle #2 \rangle}}
\newcommandx{\variance}[3][1=,3=]{\sigma^{#1}_{#3} \langle #2 \rangle}

% W 

\newcommand{\wgt}[2]{\omega_{#1}^{#2}}
\newcommand{\wgtsum}[1]{\Omega_{#1}}

% X 

\newcommand{\Xsp}{\mathsf{X}}
\newcommand{\Xfd}{\mathcal{X}}

% Y

\newcommand{\Ysp}{\mathsf{Y}}
\newcommand{\Yfd}{\mathcal{Y}}

% Z

\newcommand{\zerostr}[1]{0_{#1}}
\newcommand{\Zsp}{\mathsf{Z}}
\newcommand{\Zfd}{\mathcal{Z}}
\newcommand{\zset}{\mathbb{Z}}

% Hypotheses

\newcounter{hypA}
\newenvironment{hypA}{\refstepcounter{hypA}\begin{itemize}
  \item[({\bf A\arabic{hypA}})]}{\end{itemize}}

\newcounter{hypB}
\newenvironment{hypB}{\refstepcounter{hypB}\begin{itemize}
  \item[({\bf S})]}{\end{itemize}}

\endlocaldefs

\begin{document}

\begin{frontmatter}

% "Title of the Paper"

\title{Numerically stable online estimation of variance in particle filters}

\runtitle{Estimation of variance in particle filters}

\begin{aug}
\author{\fnms{Jimmy} \snm{Olsson}\thanksref{a,t1}\ead[label=e1]{jimmyol@kth.se}}
\and
\author{\fnms{Randal} \snm{Douc}\thanksref{b}\ead[label=e2]{randal.douc@it-sudparis.eu}}

\address[a]{Department of Mathematics \\
KTH Royal Institute of Technology \\
SE-100 44  Stockholm, Sweden \\
\printead{e1}}

\address[b]{D\'epartement CITI \\ 
TELECOM SudParis \\
9 rue Charles Fourier, 91000 EVRY \\
\printead{e2}}

\thankstext{t1}{J.~Olsson is supported by the Swedish Research Council, Grant 2011-5577.}

\runauthor{J. Olsson and R. Douc}

\affiliation{KTH Royal Institute of Technology and Institut T\'el\'ecom/T\'el\'ecom SudParis}

\end{aug}

\begin{abstract}
This paper discusses variance estimation in sequential Monte Carlo methods, alternatively termed particle filters. The variance estimator that we propose is a natural modification of that suggested by H.~P.~Chan and T.~L.~Lai [A general theory of particle filters in hidden Markov models and some applications. \emph{Ann. Statist.}, 41(6):2877--2904, 2013], which allows the variance to be estimated in a single run of the particle filter by tracing the genealogical history of the particles. However, due particle lineage degeneracy, the estimator of the mentioned work becomes numerically unstable as the number of sequential particle updates increases. Thus, by tracing only a part of the particles' genealogy rather than the full one, our estimator gains long-term numerical stability at the cost of a bias. The scope of the genealogical tracing is regulated by a lag, and under mild, easily checked model assumptions, we prove that the bias tends to zero geometrically fast as the lag increases. As confirmed by our numerical results, this allows the bias to be tightly controlled also for moderate particle sample sizes. 
\end{abstract}

\begin{keyword}[class=MSC]
\kwd[Primary ]{62M09}
%\kwd{}
\kwd[; secondary ]{62F12}
\end{keyword}

\begin{keyword}
\kwd{Asymptotic variance}
\kwd{Feynman-Kac models}
\kwd{hidden Markov models}
\kwd{particle filters}
\kwd{sequential Monte Carlo methods}
\kwd{state-space models}
\kwd{variance estimation}
\end{keyword}

% history:
% \received{\smonth{1} \syear{0000}}

%\tableofcontents

\end{frontmatter}

% Introduction
\section{Introduction}
\label{sec:introduction}
Since the \emph{bootstrap particle filter} was introduced in \cite{gordon:salmond:smith:1993}, \emph{sequential Monte Carlo} (SMC) \emph{methods}, alternatively termed \emph{particle filters}, have been successfully applied within a wide range of applications, including computer vision, automatic control, signal processing, optimisation, robotics, econometrics, and finance; see, e.g.,  \cite{doucet:defreitas:gordon:2001,ristic:arulampalam:gordon:2004} for introductions to the topic. SMC methods approximate a given sequence of distributions by a sequence of possibly weighted empirical measures associated with a sample of \emph{particles} evolving recursively and randomly in time. Each iteration of the SMC algorithm comprises two operations: a \emph{mutation step}, which moves the particles randomly in the state space, and a \emph{selection step}, which duplicates/eliminates, through resampling, particles with high/low importance weights, respectively.  

In parallel with algorithmic developments, the theoretical properties of SMC have been studied extensively during the last twenty years, and there is currently a number of available results describing the convergence, as the number of particles tends to infinity, of Monte Carlo estimates produced by the algorithm; see, e.g., the monographs \cite{delmoral:2004,delmoral:2013} and \cite[Chapter~9]{cappe:moulines:ryden:2005}. The first \emph{central limit theorem} (CLT) for SMC methods was established in \cite{delmoral:guionnet:1999}, and this result was later refined in the series of papers \cite{chopin:2004,kuensch:2005,douc:moulines:2008}. In the mentioned CLT, the asymptotic variance of the weak Gaussian limit is expressed through a recursive formula involving high-dimensional integrals over generally complicated integrands and is hence intractable in general. 

Due to its complexity, only a very few recent works have treated the important---although challenging---topic of variance estimation in SMC algorithms. A breakthrough was made by H.~P. Chan and T.~L.~Lai, who proposed, in \cite{chan:lai:2013} and within the framework of general \emph{state-space models} (or, \emph{hidden Markov models}), an estimator, from now on referred to as the \emph{Chan \& Lai estimator} (CLE), that allows the sequence of asymptotic variances to be estimated on the basis of a \emph{single} realisation of the algorithm, without the need of additional simulations. Remarkably, the CLE can be shown to be consistent (see \cite[Theorem~2]{chan:lai:2013}), i.e., to converge in probability to the true asymptotic variance as the number of particles tends to infinity. At a given time step, the CLE estimates the asymptotic variance by tracing genealogically the time-zero ancestors of the particles at the time step in question. The variance estimators proposed recently in \cite{lee:whiteley:2016} are based on the same principle, and may be viewed as refinements of the CLE within a more general framework of \emph{Feynman-Kac models} and particle algorithms with time varying particle population sizes. Moreover, \cite{lee:whiteley:2016} provides an elegant, deepened  (asymptotic as well as non-asymptotic) theoretical analysis of the technique, and these results are essential for the development of the present paper. 

Appealingly, the set of time-zero ancestors may be updated recursively in the SMC algorithm by adding  just a single line to the code. This allows variance estimates to be computed online with essentially the same computational complexity and memory demands as the original algorithm. Nevertheless, since the SMC algorithm performs repeatedly selection, it is well established that all the particles in the sample will, eventually, share the \emph{same} time-zero ancestor (see, e.g., \cite{jacob:murray:rubenthaler:2015} for a theoretical analysis of this particle path degeneracy phenomenon). Unfortunately, this implies that the CLE collapses eventually to zero as time increases. Increasing the particle sample size or resampling less frequently the particles will postpone somewhat, but not avoid, this collapse. This makes the CLE impractical in the long run. Thus, although the SMC methodology has become a standard tool in statistics and engineering, a numerically stable estimator of the SMC variance has, surprisingly, hitherto been lacking, and the aim of the present paper is to---at least partially---fill this gap.

The natural solution that we propose in the present paper is to estimate the SMC asymptotic variance by tracing only a \emph{part} of the particles' genealogy rather than the full one (i.e., back to the time-zero ancestors). By tracing genealogically only the last generations, depleted ancestor sets are avoided as long as the particle sample size is at least moderately large. Still, this measure leads to a bias, whose size determines completely the success of the approach. Nevertheless, in \cite{douc:moulines:olsson:2014}, which studies the stochastic stability of the sequence of asymptotic variances within the framework of general hidden Markov models, or, viewed differently, randomly perturbed Feynman-Kac models (see \cite[Remark~1]{douc:moulines:olsson:2014}), it is established that the variance, which at time $n$ may be expressed in the form of a sum of $n + 1$ terms, is, in the case where the perturbations form a stationary sequence, uniformly stochastically bounded in time, or, \emph{tight}. Moreover, the analysis provided in \cite{douc:moulines:olsson:2014}, which is driven by mixing assumptions, indicates that the size of the $m^{\mathrm{th}}$ term in the variance at time $n$ decreases geometrically fast with the difference $n - m$. Consequently, we may expect the last terms of the sum to represent the major part of the variance, and as long as the number $\lambda$ of traced generations, the \emph{lag}, and the particle sample size are not too small, the bias should be negligible. This argument is confirmed by our main result, Theorem~\ref{thm:tightness:bias}, which at any time point $n$ provides an order $\rho^\lag$ bound on the asymptotic (as the number of particles tends to infinity) bias, where $\rho \in (0, 1)$ is a mixing rate. Consequently, as long as the number of particles is large enough, the bias stays numerically stable in the long run and may, as it decreases geometrically fast with the lag $\lag$, be controlled efficiently. Methodologically, the estimator that we propose has similarities with the \emph{fixed-lag smoothing} approach studied in \cite{kitagawa:sato:2001,olsson:cappe:douc:moulines:2006,olsson:strojby:2010}, and we here face the same bias-variance tradeoff, in the sense that a too greedy/generous lag design leads to high bias/variance, respectively.  

The developments of the present paper are cast into the framework of randomly perturbed Feynman-Kac models, and the theoretical analysis is driven by the assumptions of \cite{douc:moulines:2012,douc:moulines:olsson:2014} (going back to \cite{douc:fort:moulines:priouret:2009}), which are easily checked for many models used in practice. In particular, by replacing the now classical strong mixing assumption on the transition kernel of the underlying Markov chain---a standard assumption in the literature that typically requires the state space of the Markov chain to be a compact set---by a \emph{local Doeblin condition}, we are able to verify the assumptions for a wide class of models with possibly non-compact state space. 

As a numerical illustration, we apply our estimator in the context of SMC-based predictor flow approximation in general state-space models, including the widely used \emph{stochastic volatility} model proposed in \cite{hull:white:1987}. We are able to report an efficient, numerically stable performance of our variance estimator, with tight control of the bias at low variance. 

The paper is structured as follows. Section~\ref{sec:preliminaries} introduces some notation, defines the framework of perturbed Feynman-Kac models, and provides some background to SMC. In Section~\ref{sec:estimator}, focus is set on asymptotic variance estimation and after a prefatory discussion on the CLE we introduce the proposed fixed-lag variance estimator. We also describe how our estimator can be straightforwardly extended to so-called \emph{updated} Feynman-Kac distribution flows. Theoretical and numerical results are found in Section~\ref{sec:theoretical:results} and Section~\ref{sec:numerical:study}, respectively, and Appendix~\ref{sec:proofs} contains all proofs. Our theoretical analysis is divided into two parts: first, the identification of the limiting bias and, second, the construction of a tight upper bound on the same. The first part relies on the theoretical machinery developed in \cite{lee:whiteley:2016}, whose key elements are recalled briefly in the beginning of Section~\ref{sec:proof:consistency:fixed:lag}. Finally, Section~\ref{sec:conclusion} concludes the paper.

% Preliminaries
\section{Preliminaries}
\label{sec:preliminaries}
\subsection{Some notation and conventions}

We assume that all random variables are defined on a common probability space $(\Omega, \mathcal{F}, \prob)$. The set of natural numbers is denoted by $\nset = \{0, 1, 2, \ldots\}$, and we let $\nsetpos = \nset \setminus \{Ê0\}$ be the positive ones. For all $(m, n) \in \nset^2$, we set $\intvect{m}{n} \eqdef \{m, m + 1, \ldots, n\}$. The set of nonnegative real numbers is denoted by $\rset_+$. For any quantities $\{ a_\ell \}_{\ell = 1}^m$, vectors are denoted by $\chunk{a}{\ell}{m} \eqdef (a_\ell, \ldots, a_m)$. 

We introduce some measure and kernel notation. Given some state space $(\Esp, \Efd)$, we denote by $\bmf{\Efd}$ and $\probmeas{\Efd}$ the spaces of bounded measurable functions and probability measures on $(\Esp, \Efd)$, respectively. For any functions $(h, h') \in \bmf{\Efd}^2$ we define the product function $h \varotimes h' : \Esp^2 \ni (x, x') \mapsto h(x) h'(x')$. The identity function $x \mapsto x$ is denoted by $\operatorname{id}$. Let $\mu$ be a measure on $(\Esp, \Efd)$; then for any $\mu$-integrable function $h$, we denote by
$$
\mu h \eqdef \int h(x) \, \mu(\rmd x)
$$
the Lebesgue integral of $h$ w.r.t. $\mu$. In addition, let $(\Esp', \Efd')$ be some other measurable space and $\genkernel$ some possibly unnormalised transition kernel $\genkernel : \Esp \times \Efd' \rightarrow \rset_+$. The kernel $\genkernel$ induces two integral operators, one acting on functions and the other on measures. More specifically, given a measure $\nu$ on $(\Esp, \Efd)$ and a measurable function $h$ on $(\Esp', \Efd')$, we define the measure 
$$
    \nu \genkernel : \Efd' \ni A \mapsto \int \genkernel(x, A) \, \nu(\rmd x) 
$$
and the function 
$$
    \genkernel h : \Esp \ni x \mapsto \int h(y) \, \genkernel(x, \rmd y),
$$
whenever these quantities are well defined.

\subsection{Randomly perturbed Feynman-Kac models} 
\label{sec:Feynman:Kac:models}
Let $(\Xsp, \Xfd)$ and $(\Zsp, \Zfd)$ be a pair of general measurable spaces. Moreover, let $\kernel{K}$ and $\init$ be a Markov transition kernel and a probability measure on $(\Xsp, \Xfd)$, respectively, and $\{ \pot[z] : z \in \Zsp \}$ a family of real-valued, positive, and measurable \emph{potential functions} on $(\Xsp, \Xfd)$. For  all vectors $\chunk{z}{k}{m} \in \Zsp^{m - k + 1}$, we define unnormalised transition kernels
$$
    \uk[\chunk{z}{k}{m}] : \Xsp \times \Xfd \ni (x_k, A) 
    \mapsto \idotsint \1_A(x_{m + 1}) \prod_{\ell = k}^m 
    \pot[z_\ell](x_\ell) \, \mk(x_\ell, \rmd x_{\ell + 1}),
$$
with the convention $\uk[\chunk{z}{k}{m}](x, A) = \delta_x(A)$ if $m < k$ (where $\delta_x$ denotes the Dirac mass located at $x$),  
and probability measures 
\begin{equation} \label{eq:def:pred}
    \pred[\chunk{z}{k}{m}] : \Xfd \ni A 
    \mapsto \frac{\init \uk[\chunk{z}{k}{m}] \1_A}
    {\init \uk[\chunk{z}{k}{m}] \1_\Xsp}. 
\end{equation}
Using these definitions we may, given a sequence $\{ z_n \}_{n \in \nset}$ of \emph{perturbations} in $\Zsp$, express the \emph{Feynman-Kac distribution flow} $\{ \pred[\chunk{z}{0}{n}] \}_{n \in \nset}$ recursively as 
\begin{equation} \label{eq:pred:rec}
    \pred[\chunk{z}{0}{n}] 
    = \frac{\pred[\chunk{z}{0}{n - 1}] \uk[z_n]}{\pred[\chunk{z}{0}{n - 1}]  \uk[z_n] \1_\Xsp}, \quad n \in \nset 
\end{equation}
(where, by the previous convention, $\pred[\chunk{z}{0}{- 1}] = \init$). Even though the previous model may be applied in a non-temporal context, we will often refer to the index $n$ as ``time''. 

\begin{example}[partially dominated state-space models] \label{example:state:space:model}
Let $(\Xsp, \Xfd)$ be a measurable space, $\hk : \Xsp \times \Xfd \rightarrow [0, 1]$ a Markov transition kernel, and $\init$ a probability measure on $(\Xsp, \Xfd)$ (the latter being referred to as the \emph{initial distribution}). In addition, let $(\Ysp, \Yfd)$ be another measurable space and $\ed : \Xsp \times \Ysp \rightarrow \rset_+$ a Markov transition density with respect to some reference measure $\refm$ on $(\Ysp, \Yfd)$. By a general state-space model we mean the canonical version of the bivariate Markov chain $\{ (X_n, Y_n) \}_{n \in \nset}$ having transition kernel 
$$
    \Xsp \times \Ysp \times \Xfd \varotimes \Yfd \ni ((x, y), A) \mapsto \iint \1_A(x', y') \ed(x', y') \, \refm(\rmd y') \, \hk(x, \rmd x')
$$
and initial distribution 
$$
    \Xfd \varotimes \Yfd \ni A \mapsto \iint \1_A(x, y) \ed(x, y) \, \refm(\rmd y) \, \init(\rmd x).  
$$ 
Here the marginal process $\{ X_n \}_{n \in \nset}$, referred to as the \emph{state process}, is only partially observed through the \emph{observation process} $\{ Y_n \}_{n \in \nset}$. For the model  $\{ (X_n, Y_n) \}_{n \in \nset}$ defined in this way, 
\begin{itemize}
    \item[(i)] the state process is a Markov chain with transition kernel $\mk$ and initial distribution $\init$, 
    \item[(ii)] the observations are, given the states, conditionally independent and such that the marginal conditional distribution of each $Y_n$ depends on $X_n$ only and has density $\pot(X_n, \cdot)$
\end{itemize}
(we refer to \cite[Section~2.2]{cappe:moulines:ryden:2005} for details). When operating on a well-specified state-space model, a key ingredient is typically the computation of the flow of \emph{predictor distributions}, where the predictor $\pred[\chunk{y}{0}{n - 1}]$ at time $n \in \nset$ is defined as the conditional distribution of the state $X_n$Ê given the record $\chunk{y}{0}{n - 1} \in \Ysp^n$ of realised historical observations up to time $n - 1$. Using Bayes' formula (see, e.g., \cite[Section~3.2.2]{cappe:moulines:ryden:2005} for details), it is straightforwardly shown that the predictor flow satisfies a perturbed Feynman-Kac recursion \eqref{eq:pred:rec} with $(\Xsp, \Xfd)$, $\hk$, and $\init$ given above, the observations $\{ Y_n \}_{n \in \nset}$ playing the role of perturbations (i.e., $\Zsp \gets \Ysp$ and $\Zfd \gets \Yfd$), and the local likelihood functions $\{Ê\ed(\cdot, y) : y \in \Ysp \}$ playing the role of potential functions $\{Ê\pot[y] : y \in \Ysp\}$. We will return to this framework in Section~\ref{sec:numerical:study}. 
\end{example}

\subsection{Sequential Monte Carlo methods}
SMC methods approximate online the Feynman-Kac flow generated by \eqref{eq:pred:rec} and a given sequence $\{ z_n \}_{n \in \nset}$ of perturbations by propagating recursively a random sample  $\{ \epart{n}{i} \}_{i = 1}^\N$ of $\Xsp$-valued \emph{particles}. More specifically, given a particle sample $\{ \epart{n}{i} \}_{i = 1}^\N$ \emph{targeting} $\pred[\chunk{z}{0}{n - 1}]$ in the sense that for all $h \in \bmf{\Xfd}$, $\predpart[\chunk{z}{0}{n - 1}] h \backsimeq \pred[\chunk{z}{0}{n - 1}] h$ as $\N$ tends to infinity, where  
$$
    \predpart[\chunk{z}{0}{n - 1}]: \Xfd \ni A \mapsto \frac{1}{\N} \sum_{i = 1}^\N \1_A(\epart{n}{i})
$$
denotes the empirical measure associated with the particles, an updated particle sample $\{ \epart{n + 1}{i} \}_{i = 1}^\N$ approximating $\pred[\chunk{z}{0}{n}]$ is, as the perturbation $z_n$ becomes accessible, formed by Algorithm~\ref{alg:SMC}. 

\bigskip
\begin{algorithm}[H] \label{alg:SMC}
    \KwData{$\{ \epart{n}{i} \}_{i = 1}^\N$, $z_n$}
    \KwResult{$\{ \epart{n + 1}{i} \}_{i = 1}^\N$}
    set $\wgtsum{n} \gets 0$\;
    \For{$i = 1 \to \N$}{
        set $\wgt{n}{i} \gets \pot[z_n](\epart{n}{i})$\;
        set $\wgtsum{n} \gets \wgtsum{n} + \wgt{n}{i}$\;
    }
    \For {$i = 1 \to \N$}{
        draw $\ind{n + 1}{i} \sim \cat(\{ \wgt{n}{\ell} / \wgtsum{n} \}_{\ell = 1}^N)$\;
        draw $\epart{n + 1}{i} \sim \mk(\epart{n}{\ind{n + 1}{i}}, \cdot)$\;
    }
    \caption{SMC particle update}
\end{algorithm}
\bigskip
(In the algorithm above, $\cat( \{ \wgt{n}{\ell} / \wgtsum{n} \}_{\ell = 1}^N)$ denotes the categorical distribution induced by the normalised particle weights $\{ \wgt{n}{\ell} / \wgtsum{n} \}_{\ell = 1}^N$.) Algorithm~\ref{alg:SMC} is initialised at time $n = 0$ by drawing $\{ \epart{0}{i} \}_{i = 1}^\N \sim \init^{\varotimes \N}$. For all $n \in \nset$ and all $h \in \bmf{\Xfd}$, the convergence, as $\N$ tends to infinity, of $\predpart[\chunk{z}{0}{n - 1}] h$ to $\pred[\chunk{z}{0}{n - 1}] h$ Êcan be established in several probabilistic senses. In particular, the first CLT for SMC methods was provided by \cite{delmoral:guionnet:1999}, establishing that 
\begin{equation} \label{eq:CLT}
\sqrt{\N} \left( \predpart[\chunk{z}{0}{n - 1}] h - \pred[\chunk{z}{0}{n - 1}] h \right) \dlim \variance{\chunk{z}{0}{n - 1}}(h) Z,  
\end{equation}
where $Z$ is standard normally distributed and the asymptotic variance is given by $\variance{\chunk{z}{0}{n - 1}} \eqdef \variance{\chunk{z}{0}{n - 1}}[0]$ with 
\begin{equation} \label{eq:def:as:var}
\variance[2]{\chunk{z}{0}{n - 1}}[\ell] : \bmf{\Xfd} \ni h \mapsto \sum_{m = \ell}^n \frac{\pred[\chunk{z}{0}{m - 1}] \{ \uk[\chunk{z}{m}{n - 1}](h - \pred[\chunk{z}{0}{n - 1}] h) \}^2 }{(\pred[\chunk{z}{0}{m - 1}] \uk[\chunk{z}{m}{n - 1}] \1_{\Xsp})^2}
\end{equation}
(see also \cite{chopin:2004,kuensch:2005,douc:moulines:2008} for similar results). The fact that we in \eqref{eq:def:as:var} define a truncated version of the variance with only $n - \ell + 1$ terms will be clear later on. In the coming section we propose a lag-based, numerically stable estimator of the sequence $\{ \variance[2]{\chunk{z}{0}{n - 1}} \}_{n \in \nset}$ of asymptotic variances. The estimator approximates $\{Ê\variance[2]{\chunk{z}{0}{n - 1}} \}_{n \in \nset}$ online, as $n$ increases, under constant computational complexity and memory requirements. Importantly, the estimator is obtained as a by-product of the particle filter output and does not require additional simulations. The numerical stability is obtained at the price of a small bias, which may be controlled under weak assumptions on the mixing properties of the model.

% The algorithm
\section{A lag-based variance estimator}
\label{sec:estimator}
\subsection{The variance estimator proposed in \cite{chan:lai:2013}}
\label{sec:the:Lai:estimator}
Since Algorithm~\ref{alg:SMC} resamples the particles at each time step, the particle cloud may be associated with a tree describing the genealogical lineages of the particles. The estimators proposed in \cite{chan:lai:2013} and \cite{lee:whiteley:2016} are based on the particles' \emph{Eve indices} $\{ \eve{n}{i} \}_{i = 1}^\N$ (the terminology is adopted from \cite{lee:whiteley:2016}), which are, for all $n \in \nset$, defined as the indices of the time-zero ancestors of the particles $\{ \epart{n}{i} \}_{i = 1}^\N$. More specifically, the Eve indices may, for all $i \in \intvect{1}{\N}$, be computed recursively in Algorithm~\ref{alg:SMC} (just after Line~6) by letting 
$$
    \eve{n}{i} \eqdef
    \begin{cases}
        i & \mbox{for } n = 0, \\
        \eve{n - 1}{\ind{n}{i}} & \mbox{for } n \in \nsetpos. 
    \end{cases}
$$
Using the Eve indices, H.~P. Chan and T.~L.~Lai proposed, in \cite{chan:lai:2013}, for all $n \in \nset$, $\varest[2]{\chunk{z}{0}{n - 1}}(h)$, with
\begin{equation} \label{eq:Lai:estimator}
    \varest[2]{\chunk{z}{0}{n - 1}} : \bmf{\Xfd} \ni h \mapsto \frac{1}{\N} \sum_{i = 1}^\N \left( \sum_{j : \eve{n}{j} = i} \left\{ h(\epart{n}{j}) - \predpart[\chunk{z}{0}{n - 1}] h \right\} \right)^2,
\end{equation}
as an estimator of $\variance[2]{\chunk{z}{0}{n - 1}}(h)$ for all $h \in \bmf{\Xfd}$. As mentioned in the introduction, we will refer to this estimator as the CLE. (More precisely, in \cite{chan:lai:2013}, focus was set on the \emph{updated} distribution flows discussed in Section~\ref{sec:updated:measures} below; the adaptation is however straightforward.) In \cite{lee:whiteley:2016}, a generalisation of the CLE, allowing the particle population size $\N$ to vary between SMC iterations, is presented. As the main result of \cite{chan:lai:2013}, the consistency, as $\N$ tends to infinity, of the CLE is established; see also \cite[Theorem~1 and Corollary~1]{lee:whiteley:2016} for a generalisation. 

The CLE is indeed remarkable, as it allows the variance to be estimated online in a single run of the particle filter with no further simulation. Nevertheless, as explained in the introduction, the previous estimator has a serious flaw which is related to the well-known \emph{particle path depletion phenomenon} of SMC algorithms. More specifically, resampling the particles systematically at each time  leads without exception to a random time point before which all the genealogical traces coincide; we refer again to \cite{jacob:murray:rubenthaler:2015}, which provides a time uniform $\ordo(\N \log \N)$ bound on the expected number of generations back in time to this most recent common ancestor. Thus, as $n$ increases, the sets $\{ j \in \intvect{1}{\N} : \eve{n}{j} = i \}$ will eventually be empty for all indices $i \in \intvect{1}{\N}$ except one, say, $i_0$, for which $\{ j \in \intvect{1}{\N} : \eve{n}{j} = i_0 \} = \intvect{1}{\N}$. As a consequence, eventually, $\varest[2]{\chunk{z}{0}{n - 1}}(h) = 0$ for all $h \in \bmf{\Xfd}$, which makes the estimator impractical. In the next section, we propose a simple modification of the CLE that stabilises numerically the same at the cost of a negligible, controllable bias. 

\subsection{Our estimator}
\label{sec:our:estimator}

The estimator that we propose is based on the simple idea of stabilising numerically the CLE by tracing, backwards in time, only a few generations of the particle genealogy, rather than tracing the history all the way back to the time-zero ancestors. In our approach, the Eve indices will be replaced by
\emph{Enoch indices}\footnote{Two figures named Enoch appear in the 2nd as well as the 6th generations of the Genealogies of Genesis, as the son of Cain and the son-son-son-son-son of Seth, respectively.} defined, for all $i \in \intvect{1}{\N}$ and $m \in \nset$, recursively as  
\begin{equation} \label{eq:def:Enoch}
\enoch{m}{n}{i} \eqdef 
\begin{cases}
i & \mbox{for } n = m, \\
\enoch{m}{n - 1}{\ind{n}{i}}  & \mbox{for } n > m.  
\end{cases}
\end{equation}
In other words, for all $n \in \nset$, $m \in \intvect{1}{n}$, and $i \in \intvect{1}{\N}$, $\epart{m}{\enoch{m}{n}{i}}$ is the ancestor of $\epart{n}{i}$ at time $m$. Now, let $\lag \in \nset$ be some fixed number, referred to as the lag, and define $\lagtime{n}{\lag} \eqdef (n - \lambda) \vee 0$; then, we propose $\varest[2]{\chunk{z}{0}{n - 1}}[\lambda](h)$, with  
\begin{equation} \label{eq:estimator}
\varest[2]{\chunk{z}{0}{n - 1}}[\lambda] : \bmf{\Xfd} \ni h \mapsto \frac{1}{\N} \sum_{i = 1}^\N \left( \sum_{j : \enoch{\lagtime{n}{\lambda}}{n}{j} = i} \left\{ h(\epart{n}{j}) - \predpart[\chunk{z}{0}{n - 1}] h \right\} \right)^2,
\end{equation}
as an estimator of the variance $\variance[2]{\chunk{z}{0}{n - 1}}(h)$ for all $n \in \nset$, $\chunk{z}{0}{n - 1} \in \Zsp^n$, and $h \in \bmf{\Xfd}$. 
Online computation of the Enoch indices $\{ \enoch{\lagtime{n}{\lambda}}{n}{i} \}_{i =Ê1}^\N$ requires the propagation of a window $\{Ê\enoch{\lagtime{n}{\lag}}{n}{i}, \ldots,  \enoch{n}{n}{i} \}_{i = 1}^\N$ of indices; see Algorithm~\ref{alg:fixed-lag:SMC} for a pseudo-code.  As the length of the window is bounded by $\lag + 1$, the memory demand of the estimator is $\ordo(\lag \N)$ independently of $n$. Moreover, since genealogical tracing has a linear complexity in $\N$, the total complexity of the estimator is $\ordo(\lag \N)$, again independently of $n$. 

\bigskip
\begin{algorithm}[H] \label{alg:fixed-lag:SMC}
    \KwData{$\{ \epart{n}{i} \}_{i = 1}^\N$, $\{Ê\enoch{\lagtime{n}{\lag}}{n}{i}, \ldots,  \enoch{n}{n}{i} \}_{i = 1}^\N$, $z_n$}
    \KwResult{$\{ \epart{n + 1}{i} \}_{i = 1}^\N$, $\{Ê\enoch{\lagtime{(n + 1)}{\lag}}{n + 1}{i}, \ldots,  \enoch{n + 1}{n + 1}{i} \}_{i = 1}^\N$}
    set $\wgtsum{n} \gets 0$\;
    \For{$i = 1 \to \N$}{
        set $\wgt{n}{i} \gets \pot[z_n](\epart{n}{i})$\;
        set $\wgtsum{n} \gets \wgtsum{n} + \wgt{n}{i}$\;
    }
    \For {$i = 1 \to \N$}{
        draw $\ind{n + 1}{i} \sim \cat(\{ \wgt{n}{\ell} / \wgtsum{n} \}_{\ell = 1}^N)$\;
        draw $\epart{n + 1}{i} \sim \mk(\epart{n}{\ind{n + 1}{i}}, \cdot)$\;
        \For{$m = \lagtime{(n + 1)}{\lambda} \to n$}{
            set $\enoch{m}{n + 1}{i} \gets \enoch{m}{n}{\ind{n + 1}{i}}$\;
        }
        set $\enoch{n + 1}{n + 1}{i} \gets i$\;
    }
    \caption{SMC particle and Enoch-index update}
\end{algorithm}
\bigskip

For $n = 0$, Algorithm~\ref{alg:fixed-lag:SMC} is initialised by drawing $\{ \epart{0}{i} \}_{i = 1}^\N \sim \init^{\varotimes \N}$ and setting $\enoch{0}{0}{i} \gets i$ for all $i \in \intvect{1}{\N}$. At the end of the algorithm, after the second \textbf{for}-loop, an estimate 
$$
    \varest[2]{\chunk{z}{0}{n}}[\lambda](h) = \frac{1}{\N} \sum_{i = 1}^\N \left( \sum_{j : \enoch{\lagtime{(n + 1)}{\lambda}}{n + 1}{j} = i} \{ h(\epart{n + 1}{j}) - \predpart[\chunk{z}{0}{n}] h \} \right)^2
$$
of $\variance[2]{\chunk{z}{0}{n}}[\lambda](h)$ may be formed for all $h \in \bmf{\Xfd}$. 

\subsection{Variance estimators for flows of updated distributions}
\label{sec:updated:measures}

Some applications involve approximation of the \emph{updated} measures 
\begin{equation} \label{eq:def:filt}
    \filt[\chunk{z}{k}{m}] : \Xfd \ni A 
    \mapsto \frac{\init \uk[\chunk{z}{k}{m - 1}] (\pot[z_m] \1_A)}
    {\init \uk[\chunk{z}{k}{m - 1}] (\pot[z_m] \1_\Xsp)},
\end{equation}
for $\chunk{z}{k}{m} \in \Zsp^{m - k + 1}$, rather than the measures defined by \eqref{eq:def:pred}.   
\begin{example}[partially dominated state-space models, revisited]
In the case of the partially dominated state-space models discussed in Example~\ref{example:state:space:model}, the updated measures $\{ \filt[\chunk{y}{0}{n}] \}_{n \in \nset}$ defined through \eqref{eq:def:filt} are the \emph{filter distributions}; more precisely, in this context, for all $n \in \nset$, $\filt[\chunk{y}{0}{n}]$ is the conditional distribution of the state $X_n$ given the realised observations $\chunk{y}{0}{n} \in \Ysp^{n + 1}$ up to time $n$ (i.e., \emph{including} the last observation $y_n$). 
\end{example}
Since for all $h \in \bmf{\Xfd}$, by normalisation, 
$$
    \filt[\chunk{z}{k}{m}] h = \frac{\pred[\chunk{z}{k}{m - 1}](\pot[z_m] h)}{\pred[\chunk{z}{k}{m - 1}] \pot[z_m]},
$$
the flow $\{ \filt[\chunk{z}{0}{n}] \}_{n \in \nset}$ of updated distributions is naturally approximated by the flow of weighted empirical measures 
\begin{equation} \label{eq:def:particle:filter}
    \filtpart[\chunk{z}{0}{n}] : A \ni \Xfd \mapsto \frac{\predpart[\chunk{z}{k}{m - 1}](\pot[z_m] \1_A)}{\predpart[\chunk{z}{k}{m - 1}] \pot[z_m]} = \sum_{i = 1}^\N \frac{\wgt{n}{i}}{\wgtsum{n}} \1_A(\epart{n}{i}),
\end{equation}
for some given sequence $\{ z_n \}_{n \in \nset}$ of perturbations, where the weights $\{ \wgt{n}{i} \}_{i = 1}^\N$ and the weight sum $\wgtsum{n}$ are computed in Algorithm~\ref{alg:SMC}. By the normality \eqref{eq:CLT} and the consistency \eqref{eq:particle:filter:consistency} one obtains, using Slutsky's theorem, for all $\chunk{z}{0}{n} \in \Zsp^{n + 1}$, the central limit theorem 
\begin{equation} \label{eq:CLT:updated:measures}
    \sqrt{\N} \left( \filt[\chunk{z}{0}{n}] h - \filt[\chunk{z}{0}{n}] h \right) \dlim \filtvariance{\chunk{z}{0}{n}}(h) Z,  
\end{equation}
as $\N$ tends to infinity, where $Z$ is standard normally distributed and the asymptotic variance is given by $\filtvariance[2]{\chunk{z}{0}{n}}(h) = \filtvariance[2]{\chunk{z}{0}{n}}[0](h)$ with 
\begin{equation} \label{eq:def:as:var:updated:measures}
    \filtvariance[2]{\chunk{z}{0}{n}}[\ell] : \bmf{\Xfd} \ni h \mapsto \frac{\variance[2]{\chunk{z}{0}{n - 1}}[\ell](\pot[z_n] \{ h - \filt[\chunk{z}{0}{n}] h \})}{(\pred[\chunk{z}{0}{n - 1}] \pot[z_n])^2}
\end{equation}
(where $\variance{\chunk{z}{0}{n - 1}}[\ell]$ is defined in \eqref{eq:def:as:var} for the original Feynman-Kac particle model). In the case $\ell = 0$, the expression \eqref{eq:def:as:var:updated:measures} is found also in \cite[Eqn.~(17)]{douc:moulines:olsson:2014}. In the light of \eqref{eq:def:as:var:updated:measures}, casting our fixed-lag approach into the framework of updated Feynman-Kac models yields the estimator 
\begin{multline} \label{eq:def:var:est:updated:measures}
    \varestfilt[2]{\chunk{z}{0}{n}}[\lambda] : \bmf{\Xfd} \ni h \mapsto \frac{\varest[2]{\chunk{z}{0}{n - 1}}[\lambda](\pot[z_n] \{h - \filtpart[\chunk{z}{0}{n}] h\})}{(\predpart[\chunk{z}{0}{n - 1}] \pot[z_n])^2} \\ 
    = \N \sum_{i = 1}^\N \left( \sum_{j : \enoch{\lagtime{n}{\lambda}}{n}{j} = i} \frac{\wgt{n}{i}}{\wgtsum{n}} \left\{ h(\epart{n}{j}) - \filtpart[\chunk{z}{0}{n}] h \right\} \right)^2
\end{multline}
for some suitable lag $\lag \in \nset$ (where the equality stems from the fact that $\predpart[\chunk{z}{0}{n - 1}] \{Ê\pot[z_n](h -  \filtpart[\chunk{z}{0}{n}] h) \} = 0$).

% Theoretical results
\section{Theoretical results}
\label{sec:theoretical:results}
\subsection{Main assumptions}

In the following we assess the theoretical properties of our estimator. All proofs are presented in Appendix~\ref{sec:proofs}. We preface our main assumptions by the definition of an \emph{$r$-local Doeblin set}.
\begin{definition} \label{defi:local-Doeblin}
Let $r \in \nsetpos$. A set $C \in \Xfd$ is $r$-\emph{local Doeblin} with respect to $\mk$Ê and $\pot$ if there exist positive functions $\varepsilon^-_C: \Zsp^r \to \rset^+$ and $\varepsilon^+_C: \Zsp^r \to \rset^+$, a family $\{ \probdoeblin{C}{\bar{z}} ; \bar{z} \in \Zsp^r \}$ of probability measures, and a family $\{\varphi_C \langle \bar{z} \rangle ; \bar{z} \in \Zsp^r \}$ of positive functions such that for all $\bar{z} \in \Zsp^r$, $\probdoeblin{C}{\bar{z}}(C) =1$ and for all $A \in \Xfd$ and $x \in C$,
\begin{equation*} \label{eq:definition-LD-set}
	\varepsilon^-_C \langle \bar{z} \rangle \, \varphi_C \langle \bar{z} \rangle(x) \, \probdoeblin{C}{\bar{z}} (A) \leq \uk[\bar{z}](x, A \cap C) \leq \varepsilon^+_C \langle \bar{z} \rangle \, \varphi_C \langle \bar{z} \rangle(x) \, \probdoeblin{C}{\bar{z}}(A).
\end{equation*}
\end{definition}

Our theoretical analysis is driven by the following assumptions. 

\begin{hypA} \label{assum:likelihoodDrift}
The pertubation process $\{ \per_n \}_{n \in \zset}$, taking on values in $(\Zsp, \Zfd)$, is strictly stationary and ergodic. Moreover, there exist an integer $r \in \nsetpos $ and a set $K \in \Zfd^{\varotimes r}$ such that the following holds.
\begin{enumerate}[(i)]
    \item \label{item:condition-L-K}
    The process $\{Ê\perblock_n \}_{n \in \zset}$, where $\perblock_n \eqdef \chunk{\per}{n r}{(n + 1)r - 1}$, is ergodic and such that $\prob(\bar{\per}_0 \in K) > 2/3$.
    \item For all $\eta > 0$ there exists an $r$-local Doeblin set $C \in \Xfd$ such that for all $\chunk{z}{0}{r - 1} \in K$,
    \begin{equation*}
        \label{eq:bound-eta-G}
        \sup_{x \in C^\complement} \uk[\chunk{z}{0}{r-1}] \1_\Xsp(x) \leq \eta \sup_{x \in \Xsp} \uk[\chunk{z}{0}{r - 1}] \1_\Xsp(x) < \infty
    \end{equation*}
    and
    \begin{equation*} \label{eq:lower-bound}
        \inf_{\chunk{z}{0}{r - 1} \in K} \frac{\varepsilon_C^- \langle \chunk{z}{0}{r - 1} \rangle}{\varepsilon_C^+ \langle \chunk{z}{0}{r - 1} \rangle} > 0,
    \end{equation*}
    where the functions $\varepsilon^+_C$ and $\varepsilon^-_C$ are given in Definition~\ref{defi:local-Doeblin}.
    \item \label{item:condition-minoration}
    There exists a set $D \in \Xfd$ such that
    \begin{equation*}
        \label{eq:condition-minoration}
        \E \left[\ln^- \inf_{x \in D} \uk[\chunk{\per}{0}{r-1}] \1_D(x) \right] < \infty.
    \end{equation*}
\end{enumerate}
\end{hypA}

\begin{hypA}\label{assum:majo-g}
\begin{enumerate}[(i)]
    \item \label{item:assum-pos-g} For all $(x, z) \in \Xsp \times \Zsp$, $\pot[z](x) > 0$.
    \item \label{item:bdd:g} For all $z \in \Zsp$, $\| \pot[z] \|_\infty < \infty$.
    \item \label{item:assum-supg} $\E \left[ \ln^+ \| \pot[\per_0] \|_\infty \right] < \infty$.
\end{enumerate}
\end{hypA}

\begin{remark} \label{rem:simplified:assumptions}
    In the case $r = 1$, \A{assum:likelihoodDrift} may be replaced by the simpler assumption
    that there exists a set $K \in \Zfd$ such that the following holds.
    \begin{enumerate}[(i)]
        \item \label{item:condition-L-K-simplified}
        $\prob \left( \per_0 \in K \right) > 2 / 3$.
        \item \label{item:condition:local:doeblin:simplified}
        For all $\eta > 0$ there exists a $1$-local Doeblin set $C \in \Xfd$ such that for all $z \in K$,
        \begin{equation}
            \label{eq:bound-eta-G-simplified}
            \sup_{x \in C^\complement} \pot[z](x) \leq \eta \| \pot[z] \|_\infty < \infty.
        \end{equation}
        \item \label{item:mino-g-simple} There exists a set $D \in \Xfd$ satisfying
        $$
            \inf_{x \in D} \mk(x, D) > 0 \quad \mbox{and} \quad \E \left[ \ln^- \inf_{x \in D} \pot[\per_0](x) \right] < \infty.
        $$
    \end{enumerate}
    This simplified version will be used in Section~\ref{sec:numerical:study}.
\end{remark}

\subsection{Theoretical properties of the fixed-lag estimator}

As established by the following proposition, for all $n \in \nset$, $\chunk{z}{0}{n - 1} \in \Zsp^n$, Êand $\lag \in \nset$, the estimator $\varest[2]{\chunk{z}{0}{n - 1}}[\lambda]$ is a consistent for $\variance[2]{\chunk{z}{0}{n - 1}}[\lagtime{n}{\lag}]$, where the latter is given by \eqref{eq:def:as:var} with $\ell = \lagtime{n}{\lag}$. 

\begin{proposition} \label{prop:consistency:fixed:lag}
Assume \A{assum:majo-g}(\ref{item:assum-pos-g}--\ref{item:bdd:g}). Then for all $n \in \nset$, $\chunk{z}{0}{n - 1} \in \Zsp^n$, $\lag \in \nset$, and $h \in \bmf{\Xfd}$, as $\N \rightarrow \infty$, 
$$
\varest[2]{\chunk{z}{0}{n - 1}}[\lambda](h) \plim \variance[2]{\chunk{z}{0}{n - 1}}[\lagtime{n}{\lag}](h). 
$$
\end{proposition}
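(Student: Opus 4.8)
The plan is to recognise the fixed-lag estimator as the Chan \& Lai estimator of the Feynman-Kac particle system \emph{restarted} at time $\ell \eqdef \lagtime{n}{\lag}$, and then to invoke the consistency theory of \cite{chan:lai:2013,lee:whiteley:2016}. Indeed, the Enoch indices $\{\enoch{\ell}{n}{i}\}_{i = 1}^\N$ appearing in \eqref{eq:estimator} are, by their very definition \eqref{eq:def:Enoch}, nothing but the Eve indices of the particle system whose time origin has been shifted to $\ell$: group $i$ collects the time-$n$ descendants of the $i$th time-$\ell$ particle. By the recursion \eqref{eq:pred:rec}, the normalised flow of this restarted system is $\{\pred[\chunk{z}{0}{m - 1}]\}_{m = \ell}^n$, with initial distribution $\pred[\chunk{z}{0}{\ell - 1}]$ (approximated by the particle cloud at time $\ell$) and with target $\pred[\chunk{z}{0}{n - 1}]$ at time $n$. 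Consequently \eqref{eq:estimator} is literally the estimator \eqref{eq:Lai:estimator} computed for this restarted system. (When $\lag \geq n$ one has $\ell = 0$, the Enoch indices coincide with the Eve indices $\{\eve{n}{i}\}$, and the statement reduces to the consistency of the Chan \& Lai estimator.)

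Second, I would identify the limit. For a system genuinely initialised by i.i.d.\ draws from $\pred[\chunk{z}{0}{\ell - 1}]$, the Chan \& Lai estimator converges in probability, by \cite[Theorem~2]{chan:lai:2013} and \cite[Theorem~1 and Corollary~1]{lee:whiteley:2016}, to the asymptotic variance of that system's central limit theorem. This variance is precisely $\variance[2]{\chunk{z}{0}{n - 1}}[\ell]$: it is the sum \eqref{eq:def:as:var} truncated to run from $m = \ell$, the term $m = \ell$ being exactly the contribution of the (now time-$\ell$) initial sampling step. Only finite-horizon regularity is needed for this, namely positivity and boundedness of the potentials on $\intvect{\ell}{n}$, which is guaranteed by \A{assum:majo-g}(\ref{item:assum-pos-g}--\ref{item:bdd:g}) and ensures positive normalising constants and a finite limiting variance; this is why the mixing hypothesis \A{assum:likelihoodDrift}, reserved for the uniform-in-$n$ bias bound, plays no role here.

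The delicate point---and the step I expect to be the main obstacle---is that the restarted system is not initialised by i.i.d.\ draws from $\pred[\chunk{z}{0}{\ell - 1}]$ but by the \emph{dependent} empirical cloud $\predpart[\chunk{z}{0}{\ell - 1}]$ produced by the earlier iterations of the filter. I would handle this by conditioning on the history $\mathcal{G}_\ell$ generated by the particles up to time $\ell$: given $\mathcal{G}_\ell$ the cloud is frozen, and the evolution from time $\ell$ to time $n$ is a bona fide Feynman-Kac particle system with the (random) initial measure $\predpart[\chunk{z}{0}{\ell - 1}]$. Expanding the square in \eqref{eq:estimator} exhibits the estimator as $\N^{-1}$ times a sum, over pairs $(j, j')$ of time-$n$ particles sharing a common time-$\ell$ ancestor, of $\{h(\epart{n}{j}) - \predpart[\chunk{z}{0}{n - 1}]h\}\{h(\epart{n}{j'}) - \predpart[\chunk{z}{0}{n - 1}]h\}$; this is a first-order (law-of-large-numbers) functional of the system, so its limit depends on the initial cloud only through $\predpart[\chunk{z}{0}{\ell - 1}]$, which converges in probability to $\pred[\chunk{z}{0}{\ell - 1}]$ by the standard consistency of the filter under \A{assum:majo-g}(\ref{item:assum-pos-g}--\ref{item:bdd:g}). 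Feeding this limit into the coalescence representation of \cite{lee:whiteley:2016}---whose key elements are recalled at the start of Section~\ref{sec:proof:consistency:fixed:lag}---then returns $\variance[2]{\chunk{z}{0}{n - 1}}[\ell]$, the correlation structure of the initial cloud affecting only higher-order fluctuations. The real work is thus to check that the Lee \& Whiteley machinery applies verbatim to a system whose initial empirical measure is converging but not i.i.d., and to verify that the coalescence occurring at the first (time-$\ell$) resampling step generates exactly the $m = \ell$ term of \eqref{eq:def:as:var}.
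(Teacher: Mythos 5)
Your proposal is correct in its identification of the limit and, importantly, in its diagnosis of where the difficulty lies, but it takes a genuinely different route from the paper and leaves that difficulty open. The paper never restarts the particle system at time $\ell = \lagtime{n}{\lag}$. It keeps the full system, initialised i.i.d.\ from $\init$ at time $0$, and works with the Lee--Whiteley genealogical measures $\mumeas{\N, \chunk{b}{0}{n}}{\chunk{z}{0}{n - 1}}$ indexed by binary strings of length $n + 1$: by Lemma~\ref{lemma:equiv:sets}, two independently drawn backward traces have distinct Enoch indices at time $\ell$ precisely when their coalescence pattern restricted to $\intvect{\ell}{n}$ is the zero string, so the fixed-lag estimator differs from the CLE only in that the coalescence indicator is restricted to the lag window. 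Expanding over all binary strings and cancelling those whose restriction to $\intvect{\ell}{n}$ vanishes leaves, to order $\N^{-1}$, exactly the unit-string terms with $m \in \intvect{\ell}{n}$, and Lemma~\ref{lemma:mu:convergence} (Lee--Whiteley's Theorem~2, applicable as stated because the system is still i.i.d.-initialised) then yields convergence to $\sum_{m = \ell}^n \term[\chunk{z}{0}{n-1}]{m}{n}(h - \pred[\chunk{z}{0}{n - 1}] h) = \variance[2]{\chunk{z}{0}{n - 1}}[\ell](h)$.

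The gap in your version is the step you yourself flag as ``the real work'': \cite[Theorem~1 and Corollary~1]{lee:whiteley:2016} and \cite[Theorem~2]{chan:lai:2013} are proved for a system whose initial particles are i.i.d.\ draws from the initial law, whereas your restarted system is initialised by the dependent cloud produced by the first $\ell$ iterations of the filter. The assertion that the limit depends on this cloud only through $\predpart[\chunk{z}{0}{\ell - 1}]$, with its correlation structure ``affecting only higher-order fluctuations,'' is precisely the content that has to be proved; it does not follow from conditioning on $\mathcal{G}_\ell$ alone, since one must then show that the conditional limit is a sufficiently continuous functional of the random, merely convergent-in-probability initial empirical measure in order to pass to the unconditional statement. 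This amounts to a triangular-array extension of the Lee--Whiteley consistency theorem that is not available in the cited literature. It is very plausibly true and could be carried out, but as written your argument reduces the proposition to an unproved statement; the paper's reorganisation (restricting the coalescence window rather than restarting the system) is exactly what renders that extension unnecessary.
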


Now, define, for all $n \in \nset$ and $\chunk{z}{0}{n - 1} \in \Zsp^n$, the \emph{asymptotic bias}  
\begin{equation} \label{eq:def:bias}
\bias{\chunk{z}{0}{n - 1}}{\lag} : \bmf{\Xfd} \ni h \mapsto \variance[2]{\chunk{z}{0}{n - 1}}(h) - \variance[2]{\chunk{z}{0}{n - 1}}[\lagtime{n}{\lag}](h), 
\end{equation}
which is always nonnegative. For the integer $r \in \nsetpos$ and the set $D \in \Xfd$ given in \A{assum:likelihoodDrift}, introduce the class 
\begin{equation} \label{eq:class:initial:distributions}
\mdr(D, r) \eqdef \left\{ \init \in \probmeas{\Xfd} : \E \left[ \ln^- \init \uk[\chunk{\per}{0}{\ell - 1}] \1_D \right] < \infty \mbox{ for all $\ell \in \intvect{0}{r}$} \right \}
\end{equation}
of initial distributions. Then the following theorem, which is the main result of this paper, states that the asymptotic bias can be controlled uniformly in time by the lag $\lambda$. In particular, the bias decreases with $\lambda$ at a geometric rate. 

\begin{theorem} \label{thm:tightness:bias}
Assume \A[assum:likelihoodDrift]{assum:majo-g}. Then there exist a constant $\rate \in (0, 1)$ and a sequence $(\bd_k)_{k \in \nsetpos}$ in $\rset_+$ such that for all $\init \in \mdr(D, r)$ (with $D$ and $r$ given by \A{assum:likelihoodDrift}), $\lag \in \nset$, $h \in \bmf{\Xfd}$, and $k \in \nset$,
\begin{equation} \label{eq:bias:bound}
\sup_{n \in \nset} \prob \left( \frac{\bias{\chunk{\per}{0}{n - 1}}{\lag}(h)}{\rate^{\lambda + 1} \| h \|_{\infty}^2} > c_k  \right) \leq \frac{1}{k}.   
\end{equation}
\end{theorem}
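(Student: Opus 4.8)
The plan is to expand the asymptotic bias as the sum of the terms that are discarded by the truncation in \eqref{eq:def:as:var}, and to control each of these by a geometric forgetting argument. First I would dispose of the trivial case: when $n \le \lag$ one has $\lagtime{n}{\lag} = 0$, so $\bias{\chunk{\per}{0}{n - 1}}{\lag}(h) = 0$ and \eqref{eq:bias:bound} holds vacuously; hence assume $n > \lag$. Then by \eqref{eq:def:as:var} and \eqref{eq:def:bias},
\begin{equation*}
\bias{\chunk{z}{0}{n - 1}}{\lag}(h) = \sum_{m = 0}^{n - \lag - 1} \term[\chunk{z}{0}{n - 1}]{m}{n}(h), \qquad \term[\chunk{z}{0}{n - 1}]{m}{n}(h) \eqdef \frac{\pred[\chunk{z}{0}{m - 1}] \{ \uk[\chunk{z}{m}{n - 1}](h - \pred[\chunk{z}{0}{n - 1}] h) \}^2}{(\pred[\chunk{z}{0}{m - 1}] \uk[\chunk{z}{m}{n - 1}] \1_\Xsp)^2}.
\end{equation*}
The key algebraic step is to factor out of each $\term[\chunk{z}{0}{n - 1}]{m}{n}(h)$ the normalised forgetting operator $x \mapsto \uk[\chunk{z}{m}{n - 1}] h(x) / \uk[\chunk{z}{m}{n - 1}] \1_\Xsp(x)$: writing $\uk[\chunk{z}{m}{n - 1}](h - \pred[\chunk{z}{0}{n - 1}] h)(x) = \uk[\chunk{z}{m}{n - 1}] \1_\Xsp(x)\,(\uk[\chunk{z}{m}{n - 1}] h(x)/\uk[\chunk{z}{m}{n - 1}] \1_\Xsp(x) - \pred[\chunk{z}{0}{n - 1}] h)$ and using the semigroup identity $\pred[\chunk{z}{0}{n - 1}] = \pred[\chunk{z}{0}{m - 1}] \uk[\chunk{z}{m}{n - 1}] / (\pred[\chunk{z}{0}{m - 1}] \uk[\chunk{z}{m}{n - 1}] \1_\Xsp)$, one sees that $\pred[\chunk{z}{0}{n - 1}] h$ is exactly the $\uk[\chunk{z}{m}{n - 1}] \1_\Xsp$-weighted $\pred[\chunk{z}{0}{m - 1}]$-average of this operator, so the centred difference is dominated by its oscillation and each term splits as (oscillation of the forgetting operator)$^2$ times a ratio of second-moment-over-first-moment-squared type for $\uk[\chunk{z}{m}{n - 1}] \1_\Xsp$.

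Second, I would import the geometric forgetting property established under \A[assum:likelihoodDrift]{assum:majo-g} in the cited works (going back to \cite{douc:fort:moulines:priouret:2009,douc:moulines:2012,douc:moulines:olsson:2014}): the oscillation of the normalised operator is bounded by a product of Dobrushin-type contraction coefficients, one per $r$-block, each strictly below one whenever the block of perturbations lies in $K$ and the chain meets the $r$-local Doeblin set $C$. Note that this oscillation bound depends only on $z_m, \ldots, z_{n - 1}$ and not on $\init$. Squaring, this yields a per-term bound $\term[\chunk{z}{0}{n - 1}]{m}{n}(h) \le \Gamma_{m, n}\, \rate^{\, n - m}\, \| h \|_\infty^2$, where $\rate \in (0, 1)$ is a \emph{deterministic} rate (absorbing the square of the per-block forgetting rate) and $\Gamma_{m, n}$ is a nonnegative random coefficient measurable with respect to the perturbations, collecting the randomly located contracting blocks and the ratios $\varepsilon_C^+ / \varepsilon_C^-$ from Definition~\ref{defi:local-Doeblin}. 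The deterministic rate is extracted via the ergodic theorem: by \A{assum:likelihoodDrift}(\ref{item:condition-L-K}) the stationary block process $\{ \perblock_n \}$ visits $K$ with asymptotic frequency exceeding $2/3$, so on any long window a definite proportion of blocks contract, and the fluctuations of the visit counts are absorbed into $\Gamma_{m, n}$.

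Third, I would sum the per-term bounds and reindex by $k = n - m \ge \lag + 1$, which produces
\begin{equation*}
\frac{\bias{\chunk{\per}{0}{n - 1}}{\lag}(h)}{\rate^{\lag + 1} \| h \|_\infty^2} \le \sum_{j \ge 0} \Gamma_{n - \lag - 1 - j, n}\, \rate^{\, j} =: W_{n, \lag},
\end{equation*}
so that the claimed factor $\rate^{\lag + 1}$ is generated automatically by the smallest admissible value of $n - m$. It then remains to establish tightness of the family $\{ W_{n, \lag} \}$, uniformly in $n$ and $\lag$; given this, taking $\bd_k$ to be the $(1 - 1/k)$-quantile of a dominating variable yields \eqref{eq:bias:bound}. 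Tightness follows by dominating each $\Gamma_{m, n}$ by a shift of a single stationary, a.s.\ finite sequence $\{ \tilde{\Gamma}_m \}$ and invoking strict stationarity of $\{ \per_n \}$, so that the law of $W_{n, \lag}$ is controlled by that of the $(n, \lag)$-independent series $\sum_{j \ge 0} \tilde{\Gamma}_{-j}\, \rate^{\, j}$, which is a.s.\ finite by geometric summability. The role of the class $\mdr(D, r)$ is precisely to control the boundary coefficients associated with the smallest indices $m$ (notably $m = 0$, which for $n = \lag + 1$ enters $W_{n,\lag}$ with unit weight), where the initial law $\init$ still influences $\pred[\chunk{z}{0}{m - 1}]$ and forgetting cannot yet be invoked: the log-integrability of $\init \uk[\chunk{\per}{0}{\ell - 1}] \1_D$ guarantees that these coefficients are a.s.\ finite and admit the required stationary domination.

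The main obstacle I anticipate is the uniform stochastic control of the random coefficients $\Gamma_{m, n}$. Unlike the classical strong-mixing setting, the local Doeblin condition contracts only when the chain occupies $C$, so $\Gamma_{m, n}$ genuinely depends on the perturbation path; the two delicate points are (i) converting the random per-block contraction into a single deterministic $\rate$ valid simultaneously for all windows and all $\lag$, which requires uniform-in-$n$ ergodic (or large-deviation) control of the visit frequencies to $K$, and (ii) bounding the second-moment-over-first-moment-squared ratios uniformly in $m$, in particular near $m = 0$ where $\init$ enters and the $\mdr(D, r)$ hypothesis must be used, with a domination valid uniformly over $\init \in \mdr(D, r)$. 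These are exactly the estimates already carried out in \cite{douc:moulines:olsson:2014} for the full variance $\variance[2]{\chunk{\per}{0}{n - 1}}[0]$; the present refinement consists only in retaining the explicit $\rate^{\lag + 1}$ prefactor contributed by truncating the sum at $m = n - \lag - 1$.
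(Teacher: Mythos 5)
Your overall architecture matches the paper's: drop the trivial case $n \le \lag$, write the bias as the sum of the $n - \lag$ discarded terms, factor each term into a normalising ratio times a squared oscillation of the forgetting operator, bound the oscillation geometrically via the machinery of \cite{douc:moulines:olsson:2014}, reindex by $n - m$ to extract $\rate^{\lag + 1}$, and use stationarity for uniformity in $n$. The role you assign to $\mdr(D, r)$ and the reindexing step are also correct.

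However, there is a genuine gap at the decisive step, namely your assertion that each random coefficient $\Gamma_{m, n}$ is ``dominated by a shift of a single stationary, a.s.\ finite sequence.'' Working forward from time $0$, the coefficients are \emph{not} stationary in $m$: they involve the normalisers $\prod_{\ell} \pred[\chunk{\per}{0}{\ell - 1}] \pot[\per_\ell]$, which depend on the entire trajectory from time $0$ and on $\init$, so no shift of a fixed sequence dominates them uniformly in $n$. The paper's resolution is a specific two-part device that your sketch does not identify. First, by stationarity the bias $\bias{\chunk{\per}{0}{n - 1}}{\lag}(h)$ is replaced in law by the time-shifted bias $\bias{\chunk{\per}{-n}{-1}}{\lag}(h)$; second, the normalisers $\pred[\chunk{\per}{-n}{\ell - 1}] \pot[\per_\ell]$ are compared with their $\prob$-a.s.\ limits $\limitfunc{\chunk{\per}{-\infty}{\ell}}$ (Proposition~1 of \cite{douc:moulines:2012}), yielding a factorisation of the bias into $A_n \times B_n$, where $A_n$ is a stationary, a.s.\ finite supremum of ratio products (its finiteness is itself a nontrivial result from the proof of Theorem~10 in \cite{douc:moulines:olsson:2014}) and $B_n$ is normalised purely by the limit functions and hence admits an $(n, \lag)$-independent bound. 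A second point your clean per-term bound $\Gamma_{m,n} \rate^{\, n - m} \| h \|_\infty^2$ obscures: the actual per-term estimate carries a subgeometric correction $\exp\{2 (n - m) \varepsilon_{n - m}\}$ with $\varepsilon_k \to 0$ only a.s.\ (Lemma~17 of \cite{douc:moulines:olsson:2014}), and absorbing it requires a Cauchy--Schwarz step that costs a square root of the rate --- which is why the theorem states $\rate^{\lag + 1}$ even though the raw squared-oscillation bounds decay like $\rate^{2(n - m)}$. Without the time-shift and the a.s.-limit normalisation, the uniform-in-$n$ tightness you need at the end is asserted rather than proved.
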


\begin{remark}
We stress that the sequence $\{ \bd_k \}_{k \in \nsetpos}$ as well as the mixing rate $\rho$ in the previous theorem depend exclusively on the properties of the model and not on the lag size $\lag$, the objective function $h$, or the initial distribution $\init$ (as long as this belongs to $\mdr(D, r)$). This means that at any level specified by $k$, the asymptotic bias can be controlled uniformly in time at a geometric rate in $\lag$.
\end{remark} 

The strength of these bounds will be illustrated numerically in Section~\ref{sec:numerical:study}. 

\subsection{Extensions to variance estimators for flows of updated distributions}

In the following we confirm that the previous results can be extended to flows of updated Feynman-Kac distributions (recall the definitions in Section~\ref{sec:updated:measures}). 

\begin{proposition} \label{prop:consistency:fixed:lag:updated:case}
Assume \A{assum:majo-g}(\ref{item:assum-pos-g}--\ref{item:bdd:g}). Then for all $n \in \nset$, $\chunk{z}{0}{n} \in \Zsp^{n + 1}$, $\lag \in \nset$, and $h \in \bmf{\Xfd}$, as $\N \rightarrow \infty$, 
$$
\varestfilt[2]{\chunk{z}{0}{n}}[\lambda](h) \plim \filtvariance[2]{\chunk{z}{0}{n}}[\lagtime{n}{\lag}](h). 
$$
\end{proposition}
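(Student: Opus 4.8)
The plan is to reduce the statement to Proposition~\ref{prop:consistency:fixed:lag} together with the ordinary consistency of the particle filter, combined by a Slutsky-type argument. Writing $\psi_\N \eqdef \pot[z_n] \{ h - \filtpart[\chunk{z}{0}{n}] h \}$ for the random test function appearing in \eqref{eq:def:var:est:updated:measures} and recalling the definition of $\varestfilt[2]{\chunk{z}{0}{n}}[\lambda]$, it suffices to treat the numerator $\varest[2]{\chunk{z}{0}{n - 1}}[\lambda](\psi_\N)$ and the denominator $(\predpart[\chunk{z}{0}{n - 1}] \pot[z_n])^2$ separately. The denominator is the easy part: by the consistency of the particle filter we have $\predpart[\chunk{z}{0}{n - 1}] \pot[z_n] \plim \pred[\chunk{z}{0}{n - 1}] \pot[z_n]$, and this limit is strictly positive by \A{assum:majo-g}(\ref{item:assum-pos-g}); hence $(\predpart[\chunk{z}{0}{n - 1}] \pot[z_n])^2 \plim (\pred[\chunk{z}{0}{n - 1}] \pot[z_n])^2 > 0$.

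The main obstacle is that $\psi_\N$ is \emph{random}, depending on the particle sample through $\filtpart[\chunk{z}{0}{n}] h$, so Proposition~\ref{prop:consistency:fixed:lag}, which applies to a fixed test function, cannot be invoked directly. To circumvent this I would exploit the quadratic structure of the estimator. Introduce the symmetric bilinear form
\begin{equation*}
\Gamma_\N(\phi_1, \phi_2) \eqdef \frac{1}{\N} \sum_{i = 1}^\N \left( \sum_{j : \enoch{\lagtime{n}{\lag}}{n}{j} = i} \{ \phi_1(\epart{n}{j}) - \predpart[\chunk{z}{0}{n - 1}] \phi_1 \} \right) \left( \sum_{j : \enoch{\lagtime{n}{\lag}}{n}{j} = i} \{ \phi_2(\epart{n}{j}) - \predpart[\chunk{z}{0}{n - 1}] \phi_2 \} \right),
\end{equation*}
so that $\varest[2]{\chunk{z}{0}{n - 1}}[\lambda](\phi) = \Gamma_\N(\phi, \phi)$ for every $\phi \in \bmf{\Xfd}$; bilinearity holds because the centering map $\phi \mapsto \phi - \predpart[\chunk{z}{0}{n - 1}] \phi$ is linear. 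Setting $\psi \eqdef \pot[z_n] \{ h - \filt[\chunk{z}{0}{n}] h \}$, which is a \emph{fixed}, bounded measurable function by \A{assum:majo-g}(\ref{item:bdd:g}), and $\Delta_\N \eqdef \filtpart[\chunk{z}{0}{n}] h - \filt[\chunk{z}{0}{n}] h$, a scalar random variable, we have $\psi_\N = \psi - \Delta_\N \pot[z_n]$, so that bilinearity yields
\begin{equation*}
\varest[2]{\chunk{z}{0}{n - 1}}[\lambda](\psi_\N) = \Gamma_\N(\psi, \psi) - 2 \Delta_\N \Gamma_\N(\psi, \pot[z_n]) + \Delta_\N^2 \Gamma_\N(\pot[z_n], \pot[z_n]).
\end{equation*}

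It then remains to control the three terms. Each diagonal term $\Gamma_\N(\psi, \psi)$ and $\Gamma_\N(\pot[z_n], \pot[z_n])$ equals $\varest[2]{\chunk{z}{0}{n - 1}}[\lambda]$ evaluated at the fixed function $\psi$, respectively $\pot[z_n]$, so Proposition~\ref{prop:consistency:fixed:lag} applies and yields convergence in probability to $\variance[2]{\chunk{z}{0}{n - 1}}[\lagtime{n}{\lag}](\psi)$, respectively $\variance[2]{\chunk{z}{0}{n - 1}}[\lagtime{n}{\lag}](\pot[z_n])$; in particular both are bounded in probability. The cross term is handled by polarisation, $\Gamma_\N(\psi, \pot[z_n]) = \tfrac{1}{4} \{ \Gamma_\N(\psi + \pot[z_n], \psi + \pot[z_n]) - \Gamma_\N(\psi - \pot[z_n], \psi - \pot[z_n]) \}$, and hence also converges in probability by Proposition~\ref{prop:consistency:fixed:lag}. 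Since $\Delta_\N \plim 0$ by consistency of the particle filter, the last two terms of the decomposition vanish in probability, leaving $\varest[2]{\chunk{z}{0}{n - 1}}[\lambda](\psi_\N) \plim \variance[2]{\chunk{z}{0}{n - 1}}[\lagtime{n}{\lag}](\psi)$. Dividing by the denominator and invoking the continuous mapping theorem for ratios then gives, by the very definition \eqref{eq:def:as:var:updated:measures} of $\filtvariance[2]{\chunk{z}{0}{n}}[\lagtime{n}{\lag}](h)$ with $\ell = \lagtime{n}{\lag}$, the asserted limit. The only genuinely delicate step is this passage from the fixed test function $\psi$ to the random one $\psi_\N$; the remainder is bookkeeping around Proposition~\ref{prop:consistency:fixed:lag} and Slutsky's lemma.
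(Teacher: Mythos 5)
Your proof is correct, and it takes a genuinely different (and arguably more modular) route than the paper. The paper does not treat Proposition~\ref{prop:consistency:fixed:lag} as a black box: it goes back inside its proof, reuses the identity \eqref{eq:critical:identity} to write $\varest[2]{\chunk{z}{0}{n - 1}}[\lambda](\pot[z_n]\{h - \filtpart[\chunk{z}{0}{n}] h\})$ in terms of the measures $\mumeas{\N, \chunk{b}{0}{n}}{\chunk{z}{0}{n - 1}}$, expands the tensor-product function $\{\pot[z_n](h - \filtpart[\chunk{z}{0}{n}] h)\}^{\varotimes 2}$ into four terms so as to factor out the random scalar $\filtpart[\chunk{z}{0}{n}] h$, and then applies Lemma~\ref{lemma:mu:convergence} to each term. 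Your argument performs the analogous separation of the random scalar one level higher, directly on the estimator viewed as the quadratic form $\Gamma_\N$: bilinearity and polarisation let you invoke only the \emph{statement} of Proposition~\ref{prop:consistency:fixed:lag} (applied to the fixed bounded functions $\psi$, $\pot[z_n]$, and $\psi \pm \pot[z_n]$), plus $\Delta_\N \plim 0$ and positivity of $\pred[\chunk{z}{0}{n - 1}]\pot[z_n]$ under \A{assum:majo-g}(\ref{item:assum-pos-g}). The underlying idea is the same in both cases --- isolate the data-dependent centering constant and exploit multilinearity --- but your version avoids re-entering the machinery of \cite{lee:whiteley:2016} altogether, at the cost of three extra invocations of Proposition~\ref{prop:consistency:fixed:lag}; the paper's version is longer but stays within the single asymptotic expansion already established. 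All the individual steps you use (linearity of $\phi \mapsto \phi - \predpart[\chunk{z}{0}{n - 1}]\phi$, boundedness of $\psi$ via \A{assum:majo-g}(\ref{item:bdd:g}), consistency \eqref{eq:particle:filter:consistency}, and Slutsky for the ratio) are sound, so I see no gap.
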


Now, as in the previous section, define, for all $n \in \nset$ and $\chunk{z}{0}{n} \in \Zsp^{n + 1}$, the asymptotic bias   
\begin{equation} \label{eq:def:bias:filter}
\biasfilt{\chunk{z}{0}{n}}{\lag} : \bmf{\Xfd} \ni h \mapsto \filtvariance[2]{\chunk{z}{0}{n}}(h) - \filtvariance[2]{\chunk{z}{0}{n}}[\lagtime{n}{\lag}](h). 
\end{equation} 

\begin{theorem} \label{thm:tightness:bias:filter}
Assume \A[assum:likelihoodDrift]{assum:majo-g}. Then the statement of Theorem~\ref{thm:tightness:bias} holds true when $\bias{\chunk{\per}{0}{n - 1}}{\lag}$ is replaced by $\biasfilt{\chunk{\per}{0}{n}}{\lag}$. 
\end{theorem}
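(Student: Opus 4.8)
The plan is to reduce the updated (filter) case to the predictor case already settled in Theorem~\ref{thm:tightness:bias}, exploiting the algebraic identity \eqref{eq:def:as:var:updated:measures}. That identity holds termwise in the truncation index $\ell$, so subtracting the expressions for $\ell = \lagtime{n}{\lag}$ and $\ell = 0$ gives, for every deterministic $\chunk{z}{0}{n} \in \Zsp^{n + 1}$ and every $h \in \bmf{\Xfd}$,
$$
\biasfilt{\chunk{z}{0}{n}}{\lag}(h) = \frac{\bias{\chunk{z}{0}{n - 1}}{\lag}(\pot[z_n] \{ h - \filt[\chunk{z}{0}{n}] h \})}{(\pred[\chunk{z}{0}{n - 1}] \pot[z_n])^2}.
$$
Thus the filter bias at $h$ is precisely the predictor bias evaluated at the modified test function $\tilde h \eqdef \pot[z_n] \{ h - \filt[\chunk{z}{0}{n}] h \}$, rescaled by the squared predictive normalising constant $(\pred[\chunk{z}{0}{n - 1}] \pot[z_n])^2$. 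Since $|h - \filt[\chunk{z}{0}{n}] h| \le 2 \| h \|_\infty$ and $\pot[z_n] \le \| \pot[z_n] \|_\infty$, I obtain the crude but sufficient estimate $\| \tilde h \|_\infty \le 2 \| \pot[z_n] \|_\infty \| h \|_\infty$. Note that a direct black-box appeal to Theorem~\ref{thm:tightness:bias} is not quite enough here, because $\tilde h$ is random (it depends on $\per_n$ and on $\filt[\chunk{\per}{0}{n}]$) and is correlated with the randomness carried by the bias itself.

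To circumvent this, I would use that the proof of Theorem~\ref{thm:tightness:bias} actually delivers a stronger, \emph{test-function-uniform} bound. Expanding $\bias{\chunk{z}{0}{n - 1}}{\lag}(g)$ via \eqref{eq:def:as:var} into the sum of its first $\lagtime{n}{\lag}$ nonnegative terms, each term, thanks to the forgetting properties of $\uk$ and the centring by $\pred[\chunk{z}{0}{n - 1}]$, is bounded by $\| g \|_\infty^2$ times a factor decaying geometrically in $n - m$ and a tight multiplier depending on the normalising constants and mixing coefficients but \emph{not} on $g$; summing over $m$ then produces, for the stationary perturbation sequence, a random variable $W_n$ with $\{ W_n \}_{n \in \nset}$ tight and
$$
\sup_{g \in \bmf{\Xfd} : \| g \|_\infty \le 1} \bias{\chunk{\per}{0}{n - 1}}{\lag}(g) \le \rate^{\lag + 1} W_n
$$
pointwise in the realisation of $\per$. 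Applying this with $g = \tilde h / \| \tilde h \|_\infty$ and inserting the sup-norm bound on $\tilde h$, the identity above yields, pointwise,
$$
\frac{\biasfilt{\chunk{\per}{0}{n}}{\lag}(h)}{\rate^{\lag + 1} \| h \|_\infty^2} \le 4 \, W_n \, \frac{\| \pot[\per_n] \|_\infty^2}{(\pred[\chunk{\per}{0}{n - 1}] \pot[\per_n])^2} \eqdef 4 \, W_n R_n.
$$

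It then remains to establish that the product $\{ W_n R_n \}_{n \in \nset}$ is tight, after which \eqref{eq:bias:bound} (with $\biasfilt{\chunk{\per}{0}{n}}{\lag}$ in place of $\bias{\chunk{\per}{0}{n - 1}}{\lag}$) follows exactly as in Theorem~\ref{thm:tightness:bias}: given $k$, choose a level $M_k$ with $\sup_n \prob(W_n R_n > M_k) \le 1/k$ and set $\bd_k \eqdef 4 M_k$; the mixing rate $\rate$ and the sequence $(\bd_k)$ are inherited from Theorem~\ref{thm:tightness:bias} and the stability estimates, hence independent of $\lag$, $h$, and $\init \in \mdr(D, r)$. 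Since $\{ W_n \}$ is already known to be tight and tightness is preserved under products of families that are each bounded in probability uniformly in $n$, the only genuinely new ingredient—and the step I expect to be the main obstacle—is the tightness of $\{ R_n \}$. Here $R_n$ is the squared ratio of the supremum $\| \pot[\per_n] \|_\infty$ of the current potential to its predictor average $\pred[\chunk{\per}{0}{n - 1}] \pot[\per_n]$, and bounding it above in probability uniformly in $n$ is a stochastic-stability statement precisely of the type available under \A[assum:likelihoodDrift]{assum:majo-g}. The local Doeblin condition of \A{assum:likelihoodDrift}, which forces the mass of $\uk[\cdot] \1_\Xsp$ to concentrate on the set $C$ up to an arbitrarily small factor $\eta$, together with the minorisation and the drift/integrability requirements of \A{assum:likelihoodDrift} and \A{assum:majo-g}, should guarantee that the predictor assigns, in a time-uniform and tight sense, enough mass to the region where $\pot[\per_n]$ is of order $\| \pot[\per_n] \|_\infty$, so that $\pred[\chunk{\per}{0}{n - 1}] \pot[\per_n]$ is bounded below by a tight multiple of $\| \pot[\per_n] \|_\infty$; equivalently, $R_n$ is bounded above in probability uniformly in $n$. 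I would obtain this by reusing the mixing and forgetting lemmas already invoked for Theorem~\ref{thm:tightness:bias} (in the spirit of \cite{douc:moulines:olsson:2014}), applied to the normalising constants, rather than re-deriving the stability from scratch.
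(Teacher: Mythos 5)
Your route is genuinely different from the paper's. You reduce the updated case to the predictor case through the exact identity $\biasfilt{\chunk{z}{0}{n}}{\lag}(h) = \bias{\chunk{z}{0}{n - 1}}{\lag}(\pot[z_n]\{h - \filt[\chunk{z}{0}{n}]h\})/(\pred[\chunk{z}{0}{n - 1}]\pot[z_n])^2$, which is correct (the centring term vanishes since $\pred[\chunk{z}{0}{n-1}](\pot[z_n]\{h - \filt[\chunk{z}{0}{n}]h\}) = 0$), and you rightly observe that a black-box appeal to Theorem~\ref{thm:tightness:bias} fails because the modified test function is random; the test-function-uniform bound you need is indeed available from the proof of Theorem~\ref{thm:tightness:bias}, since the random variables $A_n$ and $C$ there do not depend on $h$. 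The paper instead never introduces the ratio $\| \pot[\per_n] \|_\infty^2 / (\pred[\chunk{\per}{0}{n - 1}] \pot[\per_n])^2$ at all: it writes $\biasfilt{\chunk{\per}{0}{n}}{\lag}(h)$ directly as a sum of normalised squared terms $\DDelta{\delta_x,\pred[\chunk{\per}{0}{m - 1}]}{\chunk{\per}{m}{n - 1}}{\pot[\per_n] h, \pot[\per_n]}$ (i.e., with $h$ and $\1_\Xsp$ replaced by $\pot[\per_n] h$ and $\pot[\per_n]$ \emph{inside} the $\Delta$-functional of \eqref{eq:def-Delta}), and then invokes \cite[Proposition~16(iii)]{douc:moulines:olsson:2014} for these modified arguments. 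That choice makes the extra potential self-normalising and avoids your auxiliary stability step entirely, at the price of needing the explicit filter-variance representation from \cite[Theorem~11]{douc:moulines:olsson:2014}.

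The one genuine gap in your write-up is the tightness of $R_n = \| \pot[\per_n] \|_\infty^2 / (\pred[\chunk{\per}{0}{n - 1}] \pot[\per_n])^2$, which you correctly flag as the main obstacle but only assert "should" follow from the local Doeblin structure. As stated, this is an unproven claim, and the concentration argument you sketch (predictor mass on the region where $\pot[\per_n]$ is of order $\| \pot[\per_n] \|_\infty$) is not how one would close it. The efficient closure uses ingredients already deployed in the proof of Theorem~\ref{thm:tightness:bias}: by stationarity one may replace $R_n$ by its time-shifted version $\| \pot[\per_0] \|_\infty^2 / (\pred[\chunk{\per}{-n}{- 1}] \pot[\per_0])^2$ (jointly with the shifted bias, so that the product remains a functional of a single path), and \cite[Proposition~1]{douc:moulines:2012} gives $\pred[\chunk{\per}{-n}{- 1}] \pot[\per_0] \to \limitfunc{\chunk{\per}{-\infty}{0}} > 0$, $\prob$-a.s., while \A{assum:majo-g}(\ref{item:assum-supg}) gives $\| \pot[\per_0] \|_\infty < \infty$ $\prob$-a.s.; an a.s. convergent sequence with a.s. finite limit is bounded in probability uniformly in $n$, whence $\{ R_n \}_{n \in \nset}$ is tight and your product $W_n R_n$ argument goes through with a new sequence $(\bd_k)$, which the statement of Theorem~\ref{thm:tightness:bias} permits. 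With that step supplied, your proof is complete and is a legitimate alternative to the paper's.
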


\subsection{Bounds on the asymptotic bias under strong mixing assumptions}

Before continuing to the numerical part of the paper, we provide, for the sake of completeness, a stronger bound on the asymptotic bias under the following \emph{strong mixing} assumption, which is standard in the literature of SMC analysis (see \cite{delmoral:guionnet:2001} and, e.g., \cite{delmoral:2004,cappe:moulines:ryden:2005,crisan:heine:2008} for refinements) and points to applications where the state space $\Xsp$ is a compact set.   

\begin{hypB} \label{ass:strong:mixing}
    \begin{enumerate}[(i)]
    \item \label{ass:strong:mixing-1} There exist constants $0 < \mdlow < \mdup < \infty$ and a probability measure $\mu \in \probmeas{\Xfd}$ such that for all $x \in \Xsp$ and $A \in \Xfd$, 
    $$
        \mdlow \mu(A) \leq \mk(x, A) \leq \mdup \mu(A).
    $$
    \item \label{ass:strong:mixing-2} There exist constants $0 < \potlow < \potup < \infty$ such that for all $z \in \Zsp$, $\| \pot[z] \|_\infty \leq \potup$ and for all $(x, z) \in \Xsp \times \Zsp$, $\mk \pot[z](x) \geq \potlow$.
    \end{enumerate}
\end{hypB}

Under \B{ass:strong:mixing}(\ref{ass:strong:mixing-1}), define $\varrho \eqdef 1 - \mdlow / \mdup$; then, instead of bounding the asymptotic bias stochastically (as in Theorem~\ref{thm:tightness:bias}), the following theorem provides a \emph{deterministic} bound on the same. The proof is found in Section~\ref{sec:proof:strong:bias:bound}. 

\begin{theorem} \label{thm:strong:bias:bound}
Assume \B{ass:strong:mixing}. Then there exists a constant $c \in \rsetpos$ such that for all $n \in \nset$, $\chunk{z}{0}{n - 1} \in \Zsp^n$, $\lag \in \nset$, and $h \in \bmf{\Xfd}$, 
\begin{equation*} \label{eq:strong:bias:bound}
\bias{\chunk{z}{0}{n - 1}}{\lag}(h) \leq 
\begin{cases} 
0 & \mbox{for } n \leq \lag, \\
c \varrho^{2(\lambda + 1)} \| h \|_{\infty}^2 & \mbox{for } n > \lag. 
\end{cases} 
\end{equation*}
\end{theorem}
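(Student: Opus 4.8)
The plan is to exploit the fact that, by \eqref{eq:def:bias} and \eqref{eq:def:as:var}, the asymptotic bias is \emph{exactly} the sum of the discarded low-index terms of the variance. Writing $\bar h \eqdef h - \pred[\chunk{z}{0}{n - 1}] h$ (so that $\operatorname{osc}(\bar h) = \operatorname{osc}(h) \le 2 \| h \|_\infty$), one has, for $n > \lag$,
\begin{equation*}
\bias{\chunk{z}{0}{n - 1}}{\lag}(h) = \sum_{m = 0}^{n - \lag - 1} \frac{\pred[\chunk{z}{0}{m - 1}] \{ \uk[\chunk{z}{m}{n - 1}] \bar h \}^2}{(\pred[\chunk{z}{0}{m - 1}] \uk[\chunk{z}{m}{n - 1}] \1_\Xsp)^2},
\end{equation*}
while for $n \le \lag$ the value $\lagtime{n}{\lag} = 0$ makes the sum empty, which yields the first case of the theorem. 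It therefore suffices to bound the $m$-th summand, call it $T_m$, by $c' \varrho^{2(n - m)} \| h \|_\infty^2$ for a model constant $c'$ and to sum the resulting geometric series; since the surviving exponents run over $n - m \ge \lag + 1$, this produces the claimed factor $\varrho^{2(\lambda + 1)}$ once $(1 - \varrho^2)^{-1}$ is absorbed into $c$.

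Fix $m$ and abbreviate $\mu_m \eqdef \pred[\chunk{z}{0}{m - 1}]$ and $Q \eqdef \uk[\chunk{z}{m}{n - 1}]$. I would decompose the integrand as a product $w \phi$, where $w \eqdef Q \1_\Xsp / \mu_m Q \1_\Xsp$ and $\phi \eqdef Q \bar h / Q \1_\Xsp$, so that $T_m = \mu_m (w \phi)^2$, $\mu_m w = 1$, and, crucially, $\mu_m(w \phi) = \mu_m Q \bar h / \mu_m Q \1_\Xsp = \pred[\chunk{z}{0}{n - 1}] \bar h = 0$ by the semigroup property of the flow. The potential $\pot[z_m]$ appearing as the first factor of $Q$ cancels in $\phi$, so that $\phi(x)$ equals the normalised Feynman--Kac flow over the block $\chunk{z}{m + 1}{n - 1}$, issued from $\mk(x, \cdot)$ and tested against $\bar h$; the classical geometric forgetting of such flows under the strong mixing condition \B{ass:strong:mixing}(\ref{ass:strong:mixing-1}) (with Dobrushin rate $\varrho = 1 - \mdlow / \mdup$) then gives $\operatorname{osc}(\phi) \le \varrho^{n - m} \operatorname{osc}(\bar h)$. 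Using $\mu_m(w \phi) = 0$ together with $\mu_m w = 1$, any reference value $a$ in the range of $\phi$ satisfies $|a| \le \operatorname{osc}(\phi)$, whence $|w \phi| \le 2 \operatorname{osc}(\phi) \, w$ pointwise and $T_m \le 4 \operatorname{osc}(\phi)^2 \, \mu_m w^2$.

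It then remains to bound $\mu_m w^2$ by a model constant uniformly in $m$ and $n$, and here I would invoke \B{ass:strong:mixing} twice. For the numerator, the upper bound $Q \1_\Xsp(x) = \pot[z_m](x) \, \mk \uk[\chunk{z}{m + 1}{n - 1}] \1_\Xsp(x) \le \potup \mdup \, \mu \uk[\chunk{z}{m + 1}{n - 1}] \1_\Xsp$ suffices; for the denominator, the mixing lower bound $\mk \pot[z_m] \ge \potlow$ yields, after conditioning on the penultimate step, $\uk[\chunk{z}{0}{m - 1}] \pot[z_m] \ge \potlow \, \uk[\chunk{z}{0}{m - 1}] \1_\Xsp$, hence the key estimate $\mu_m \pot[z_m] = \pred[\chunk{z}{0}{m - 1}] \pot[z_m] \ge \potlow$, and combined with $\mk(\,\cdot\,) \ge \mdlow \mu(\cdot)$ this gives $\mu_m Q \1_\Xsp \ge \potlow \mdlow \, \mu \uk[\chunk{z}{m + 1}{n - 1}] \1_\Xsp$; dividing yields $\mu_m w^2 \le (\potup \mdup / \potlow \mdlow)^2$. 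Assembling the pieces gives $T_m \le 16 (\potup \mdup / \potlow \mdlow)^2 \varrho^{2(n - m)} \| h \|_\infty^2$, and summing over $m$ closes the argument with $c = 16 (\potup \mdup / \potlow \mdlow)^2 / (1 - \varrho^2)$. The main obstacle is the second step: recognising that the troublesome potential $\pot[z_m]$ survives only in the weight $w$ and not in $\phi$, and that, despite this, $\mu_m w^2$ stays bounded precisely because \B{ass:strong:mixing}(\ref{ass:strong:mixing-2}) controls $\mk \pot$---and hence the \emph{predicted} potential---from below; the forgetting estimate for $\operatorname{osc}(\phi)$ is standard and can simply be cited.
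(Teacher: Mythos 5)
Your argument is correct and follows essentially the same route as the paper's proof: you split each discarded term into the bounded weight $\uk[\chunk{z}{m}{n-1}]\1_\Xsp(x)/\pred[\chunk{z}{0}{m-1}]\uk[\chunk{z}{m}{n-1}]\1_\Xsp \leq \potup\mdup/(\potlow\mdlow)$ (controlled exactly as in the paper via \B{ass:strong:mixing}) and a centred, normalised Feynman--Kac factor that forgets geometrically, then sum the geometric series over $m \leq n - \lag - 1$. The only cosmetic difference is that you derive the pointwise bound on $\phi$ from an oscillation estimate plus the centring $\mu_m(w\phi)=0$, whereas the paper cites \cite[Proposition~4.3.4]{delmoral:2004} directly for $|\phi| \leq \varrho^{n-m}\|h\|_\infty$, which only changes the numerical constant.
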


% Numerical study 
\section{Application to state-space models}
\label{sec:numerical:study}
\subsection{Predicting log-volatility}
\label{sec:stochastic:volatility}

As a first numerical illustration we consider the problem of computing the predictor flow in the---now classical---stochastic volatility model 
\begin{equation} \label{eq:sto:vol:model}
	\begin{split}
		X_{n + 1} &= \varphi X_n + \sigma U_{n + 1}, \\
		Y_n &= \beta \exp \left( X_n / 2 \right) V_n,   
	\end{split}
	\quad n \in \nset, 
\end{equation}
proposed in \cite{hull:white:1987}, where $\{ U_n \}_{n \in \nsetpos}$ and $\{ V_n \}_{n \in \nset}$ are sequences of uncorrelated standard Gaussian noise. In this model, where only the process $\{ Y_n \}_{n \in \nset}$ is observable, $\beta$ is a constant scaling factor, $\varphi$ is the \emph{persistence}, and $\sigma$ is the \emph{volatility of the log-volatility}. In the case where $|\varphi| < 1$, the state process $\{ X_n \}_{n \in \nset}$ is stationary with a Gaussian invariant distribution having zero mean and variance $\tilde{\sigma}^2 \eqdef \sigma^2 / (1 - \varphi^2)$.  

It is easily checked that the model $\{ (X_n, Y_n) \}_{n \in \nset}$ defined by \eqref{eq:sto:vol:model} is indeed a state-space model in the notion of Example~\ref{example:state:space:model}, and in this case $\Xsp = \Ysp = \rset$, $\Xfd = \Yfd = \mathcal{B}(\rset)$, and  
\[
	\begin{split}
		\mk(x, A) &= \int_A \phi(x' ; \varphi x, \sigma^2) \, \rmd x', \\
		\pot(x, y) &= \phi(y ; 0, \beta^2 \me^x),  
	\end{split}
\]
where $(x, y) \in \rset^2$, $A \in \mathcal{B}(\rset)$, and $\phi(\cdot ; \mu, \varsigma^2)$ denotes the density of the Gaussian distribution with expectation $\mu \in \rset$ and standard deviation $\varsigma > 0$. Consequently, the flow of one-step log-volatility predictors satisfies the perturbed Feynman-Kac recursion \eqref{eq:pred:rec} with $\pot[y] = \ed(\cdot, y)$, $y \in \Ysp$, and may thus be approximated using SMC methods. 

We check that the model satisfies the simplified assumptions in Remark~\ref{rem:simplified:assumptions} for the scenario where $|\varphi| < 1$, implying stationary state and observation processes, the latter with marginal stationary distribution 
\begin{equation} \label{eq:stationary:marginal:observations}
	\mathcal{B}(\rset) \ni A \mapsto \iint \1_A(y) \phi(y ; 0, \beta^2 \me^x) \phi(x ; 0, \tilde{\sigma}^2) \, \rmd x \, \rmd y. 
\end{equation}
For this purpose, we first note that for all $y \in \rset$,
$$
	\| \pot[y] \|_\infty = \frac{1}{|y| \sqrt{2 \pi}} \me^{-1/2}, 
$$
i.e., for all $(x, y) \in \rset^2$, 
\begin{equation*}
	\frac{\pot[y](x)}{\| \pot[y] \|_\infty} \propto |y| \exp \left( - \frac{x}{2} - \frac{y^2 \me^{- x}}{2 \beta^2} \right), 
\end{equation*}
where the right hand side tends to zero as $|x| \rightarrow \infty$ for all $y \in \rset$. We thus conclude that Conditions~\eqref{item:condition-L-K} and \eqref{item:condition:local:doeblin:simplified} in Remark~\ref{rem:simplified:assumptions} hold true for any compact set $K \subset \rset$ with probability exceeding $2/3$ under \eqref{eq:stationary:marginal:observations}. In this case, every compact set $C \subset \rset$ is $1$-local Doeblin with respect to Lebesgue measure. In addition, since 
\begin{equation} \label{eq:finite:second:moment:obs:process}
	\E \left[ÊY_0^2 \right] = \beta^2 \int \me^x \phi(x ; 0, \tilde{\sigma}^2) \, \rmd x < \infty, 
\end{equation}
Condition~\eqref{item:mino-g-simple} is satisfied for all compact sets $D \subset \rset$. Moreover, for such compact sets $D$, \eqref{eq:finite:second:moment:obs:process} implies that $\mdr(D, 1)$ contains all initial distributions $\init$ with $\init(D) > 0$; in particular,  $\mdr(D, 1)$ contains the invariant Gaussian distribution of the state process, which we use as initial distribution in the predictor Feynman-Kac recursion. 

In this context, we aim at computing the sequence $\{ \pred[\chunk{y}{0}{n - 1}] \operatorname{id} \}_{n \in \nset}$ of predictor means; however, due to the nonlinearity of the emission density $\pot$, closed-form expressions are beyond reach, and we thus apply SMC for this purpose. In the scenario considered by us, $(\beta, \varphi, \sigma) = (.641, .975, .165)$, which are the parameter estimates obtained in \cite[Example~11.1.1.2]{cappe:moulines:ryden:2005} on the basis of British pound/US dollar daily log-returns from 1~October 1981 to 28~June 1985. In this setting, the following two numerical experiments were conducted.  

\subsubsection{Lag size influence}
\label{sec:lag:size:influence}

First, in order to assess the dependence of the bias \eqref{eq:def:bias} on the lag $\lag$, a record comprising $600$ observations were generated by simulation of the model under the parameters above. We re-emphasise that even though we here, for simplicity, consider the scenario of a perfectly specified model, this is not required in our assumptions in Section~\ref{sec:theoretical:results} (as long as the observation sequence is stationary). By inputting, $100$ times, these observations into Algorithm~\ref{alg:fixed-lag:SMC} and running the same with $\N = 4000$ particles, $100$ replicates of the particle predictor mean at time $600$ were produced and furnished with estimates of the asymptotic variance at the same time point. For each replicate, the asymptotic variance was estimated using all the lags $\lag \in \{2, 10, 12, 14, 16, 18, 20, 22, 50, 100, 200, 600\}$, where the last one, $\lag = 600$, corresponds to the CLE. Moreover, the reference value $1.63$ of the asymptotic variance, again at the time point $600$, was obtained by the brute-force approach consisting in generating as many as $1000$ replicates of the particle predictor mean, again with $\N = 4000$ particles, and simply multiplying the sample variance of these replicates by $\N$. The outcome is reported numerically and visually in Table~\ref{tab:lags:means:stds:stovol} and Figure~\ref{fig:boxplots:stovol}, respectively, where the different boxes in Figure~\ref{fig:boxplots:stovol} correspond to different lags. For each box, the reference value is indicated by a blue-colored asterisk $(\ast)$ and the average estimate of each box is indicated by a red ditto. From Figure~\ref{fig:boxplots:stovol} it is evident that the clear bias introduced by the smallest lags ($\lag \in \{2, 10, 12\}$) decreases rapidly as $\lag$ increases, and for $\lag = 20$ the bias is more or less eliminated. After this, increasing further $\lag$ leads, as we may expect from the particle path degeneracy, to significant increase of variance and no further improvement of the bias. On the contrary, the fact that the cardinality of the set $\{Ê\enoch{\lagtime{n}{\lambda}}{n}{i} : i \in \intvect{1}{\N} \}$ decreases monotonously as $\lag$ increases, leading to constantly zero variance estimates for $n$ and $\lambda$ large enough, re-introduces bias also for large $\lag$. For the last replicate, the cardinalities of the sets $\{ \eve{n}{i} : i \in \intvect{1}{\N} \}$ and $\{Ê\enoch{\lagtime{n}{20}}{n}{i} : i \in \intvect{1}{\N} \}$ were $8$ and $140$, respectively.  
\begin{figure}[H] 
    %\vspace{-25mm}
    \centering
    \includegraphics[width=0.8\textwidth]{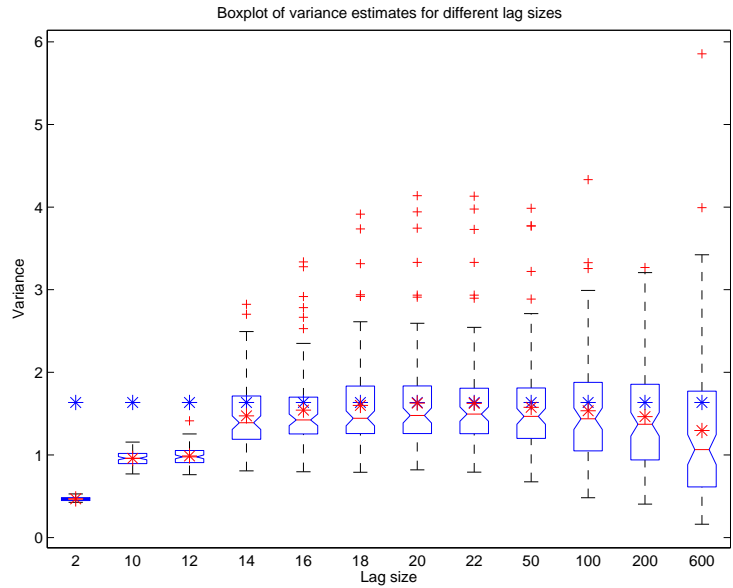} % Data in fixed_lag_stovol_161211.mat
    %\vspace{-30mm}
    \caption{Estimated asymptotic variances of the particle predictor mean at time $600$ in the stochastic volatility model \eqref{eq:sto:vol:model}. The particle population size is $\N = 4,\!000$. The boxes are based on $100$ replicates of Algorithm~\ref{alg:fixed-lag:SMC} and correspond to the different lags $\lag \in \{2, 10, 12, 14, 16, 18, 20, 22, 50, 100, 200, 600\}$. The reference value $1.63$, which is indicated by blue-colored asterisks $(\ast)$, was obtained as the sample variance (scaled by the number of particles) of $1000$ independent replicates of the particle predictor mean at time $600$ (again, Algorithm~\ref{alg:fixed-lag:SMC} used $\N = 4000$ particles). Red-colored asterisks indicate average estimates.} \label{fig:boxplots:stovol}
\end{figure}

\begin{table}[H] \label{tab:lags:means:stds}
    \begin{center}
        \begin{tabular}{c|c|c} \toprule
            $\lambda$ & Mean & St. dev. \\ \midrule 
            $2$     & $.47$   & $.02$ \\
            $10$   & $.96$   & $.09$ \\ 
            $12$   & $.99$   & $.11$ \\
            $14$   & $1.47$Ê& $.40$ \\
            $16$   & $1.54$ & $.49$ \\
            $18$   & $1.60$ & $.57$ \\
            $20$   & $1.63$ & $.62$ \\
            $22$   & $1.62$ & $.61$ \\
            $50$   & $1.58$ & $.61$ \\
            $100$ & $1.53$ & $.71$ \\
            $200$ & $1.46$ & $.71$ \\
            $600$ & $1.30$ & $.96$ \\
        \bottomrule
        \end{tabular} 
    \end{center}
    \caption{Means and standard deviations of the variance estimates reported in Figure~\ref{fig:boxplots:stovol}.} 
    \label{tab:lags:means:stds:stovol}
\end{table}

\subsubsection{Long-term stability}
\label{sec:long:term:stability}

In order to investigate numerically the long-term stability of our fixed-lag estimator, we executed Algorithm~\ref{alg:fixed-lag:SMC} on a considerably longer observation record comprising $3500$ values generated by simulation. The number of particles was set to $\N = 5000$.Ê Guided by the outcome of the previous experiment, we furnished the estimated predictor means with variance estimates obtained in the same sweep of the data using the fixed-lag estimator with $\lag = 20$. In parallel, the CLE was computed on the same realisation of the particle cloud. Finally, the brute-force approach estimating the asymptotic variances on the basis of $1200$ replicates of the predictor mean sequence was re-applied as a reference. 

Figure~\ref{fig:variance:evolution:short} displays the time evolution of the variance estimates over the initial $200$ time steps, where only estimates corresponding to every second time step have been plotted for readability. As evident from the top panel, the CLE targets well the reference values at most time points for this relatively limited time horizon, even though slight numerical instability may be discerned towards the very end of the plot. In addition, the fixed-lag estimator closes nicely the reference values for most time points with a variance that is somewhat smaller than that of the CLE. In order to display the estimators' long-term stability properties, Figure~\ref{fig:variance:evolution:long} provides the analogous plot for the full observation record comprising $3500$ values. Again, for readability, only estimates corresponding to every $35^\mathrm{th}$ time step have been plotted. Now, as clear from the top panel, the estimates produced by the CLE degenerate rapidly and after, say, $1500$ time steps the CLE loses track completely of the reference values. From time step $2871$, all particles in the cloud share the same Eve index, and the CLE collapses to zero. On the the contrary, from the bottom panel it is evident that the estimates delivered by the fixed-lag estimator stay numerically stable and closes well the reference values at most time points. In particular, the variance peak arising as a result of extreme state process behavior at time $3395$ is captured strikingly well by the fixed-lag estimator. 

\begin{figure}[H] % data in variance_evolution_161214b.mat
%\vspace{-25mm}
\centering
\includegraphics[width=0.8\textwidth]{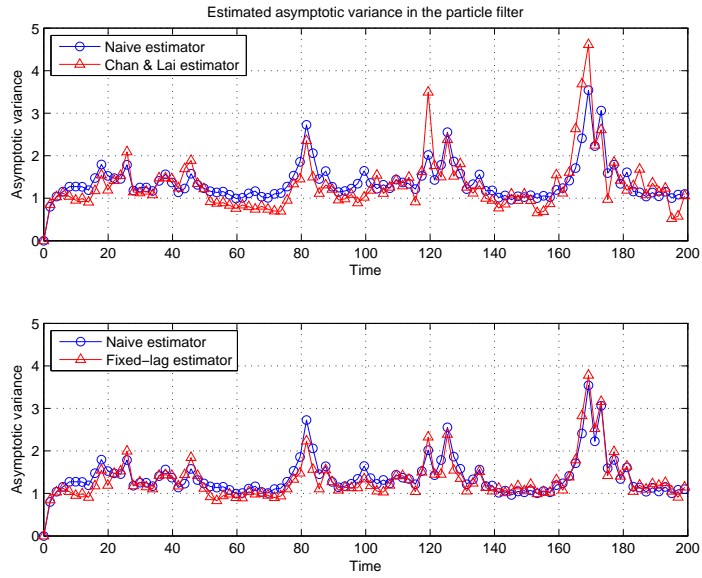}
%\vspace{-30mm}
\caption{Long-term evolution of estimated asymptotic variances of particle predictor means in the stochastic volatility model \eqref{eq:sto:vol:model}. For clarity, only every second estimate is plotted. 
The top panel displays variance estimates ($\circ$) produced using the CLE estimator along with reference values obtained as the sample variances (scaled by the number of particles) computed from $1200$ independent replicates of the particle predictor mean sequence. The bottom panel displays the analogous plot for estimates ($\circ$) produced using the fixed-lag estimator with $\lag = 20$. The number of particles was set to $\N = 5000$ is all cases.}
 \label{fig:variance:evolution:short}
\end{figure}

\begin{figure}[H] % data in variance_evolution_161214b.mat
%\vspace{-25mm}
\centering
\includegraphics[width=0.8\textwidth]{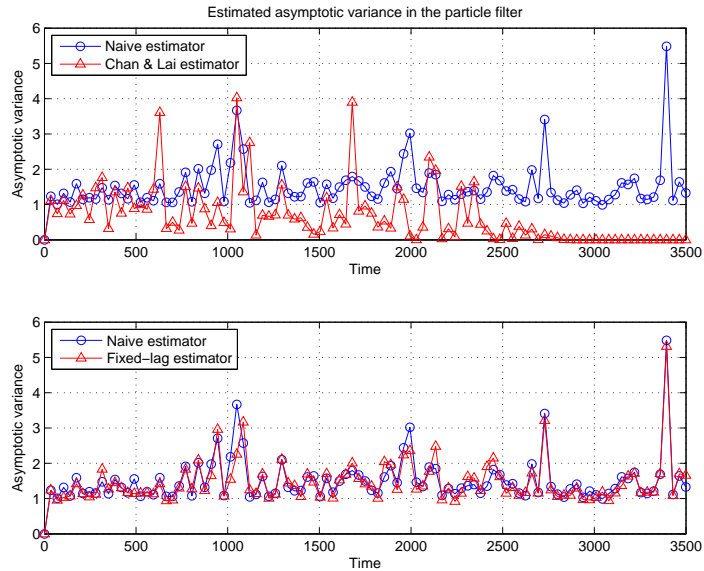}
%\vspace{-30mm}
\caption{The same plot as in Figure~\ref{fig:variance:evolution:short}, but now for the full observation record comprising $3500$ observations. For clarity, only every $35^{\mathrm{th}}$ estimate is plotted.}
\label{fig:variance:evolution:long}
\end{figure}

\subsection{SMC confidence bounds}
\label{sec:lin:Gauss:model}

The \emph{linear Gaussian state-space model}
\begin{equation} \label{eq:lin:Gauss:model}
\begin{split}
X_{n + 1} &= \varphi X_n + \sigma_u U_{n + 1}, \\
Y_n &= X_n + \sigma_v V_n,   
\end{split}
\quad n \in \nset, 
\end{equation}
where $\{ U_n \}_{n \in \nsetpos}$ and $\{ V_n \}_{n \in \nset}$ are again sequences of uncorrelated standard Gaussian noise, allows predictor means to be computed in a closed form using \emph{Kalman prediction} (see, e.g., \cite[Algorithm~5.2.9]{cappe:moulines:ryden:2005}). Thus, allowing comparisons with true quantities to be made, the class of linear Gaussian models is often used as a testing lab for SMC algorithms. In the setting where $|\varphi| < 1$, the state and observation processes are stationary, with zero mean Gaussian marginal stationary distributions with variances $\tilde{\sigma}^2$ and $\tilde{\sigma}^2 + \sigma_v^2$, respectively, where $\tilde{\sigma}^2 \eqdef \sigma_u^2 / (1 - \varphi^2)$. We let the former initialise the state process. 

Arguing along the lines of Section~\ref{sec:stochastic:volatility}, Assumptions~\A[assum:likelihoodDrift]{assum:majo-g} are checked straightforwardly also for this model. We leave the details to the reader.   

After having generated, for the parameterisation $(\varphi, \sigma_u, \sigma_v) = (.98, .2, 1)$, an observation record of $600$ observations, the experiment in Section~\ref{sec:lag:size:influence} was repeated for the same set of lag sizes $\lag$. As in the stochastic volatility example, the evolution of the $\N = 4000$ particles followed the same model dynamics as that generating the observations, and the reference value $1.102$ of the variance at $n = 600$ was obtained on the basis of $1000$ predictor mean replicates. The outcome is reported in Figure~\ref{fig:boxplots:lin:Gauss} and Table~\ref{tab:lags:means:stds:lin:Gauss}, which for this model indicate a somewhat even more robust performance of our estimator with respect to the lag size; indeed, more or less all the lag sizes in the interval $\intvect{12}{22}$ yield negligible biases, with only a slight increase of variance for the larger ones. According to Table~\ref{tab:lags:means:stds:lin:Gauss}, $\lambda = 18$ yields the minimal bias. As in the previous example, The CLE (corresponding to the lag size $\lambda = 600$) exhibits very unstable performance due to the relatively high $n$-to-$\N$ ratio, with at least $75\%$ of its estimates falling below the reference value. 

\begin{figure}[H] 
%\vspace{-25mm}
\centering
\includegraphics[width=0.8\textwidth]{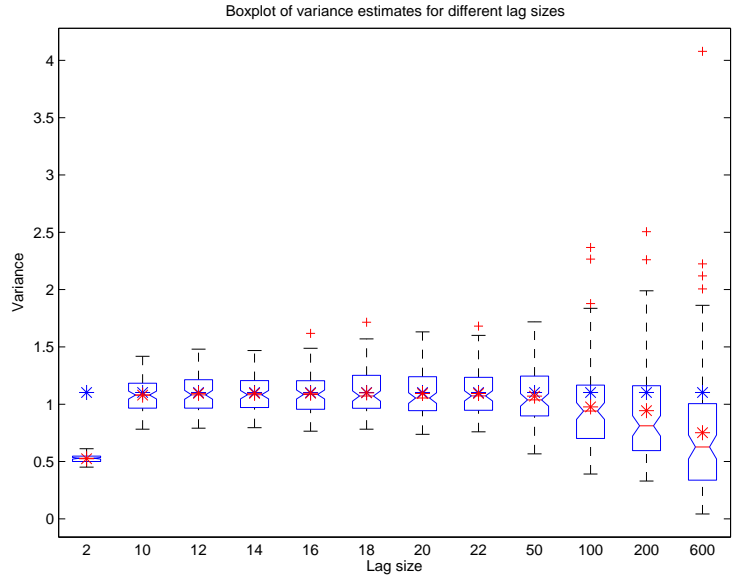} % Data in fixed_lag_stovol_161220.mat
%\vspace{-30mm}
\caption{Estimated asymptotic variances of the particle predictor mean at time $600$ in the linear Gaussian model \eqref{eq:lin:Gauss:model}. The particle population size is $\N = 4000$. The boxes are based on $100$ replicates of Algorithm~\ref{alg:fixed-lag:SMC} and correspond to the different lags $\lag \in \{2, 10, 12, 14, 16, 18, 20, 22, 50, 100, 200, 600\}$. The reference value $1.102$, which is indicated by blue-colored asterisks $(\ast)$, was obtained as the sample variance (scaled by the number of particles) of $1000$ independent replicates of the particle predictor mean at time $600$ (again, Algorithm~\ref{alg:fixed-lag:SMC} used $\N = 4000$ particles). Red-colored asterisks indicate average estimates of the boxes.}
\label{fig:boxplots:lin:Gauss}
\end{figure}

\begin{table}[H] 
\begin{center}
\begin{tabular}{c|c|c} \toprule
    $\lambda$ & Mean & St. dev. \\ \midrule 
    $2$     & $.524$   & $.035$ \\
    $10$   & $1.080$ & $.157$ \\ 
    $12$   & $1.095$ & $.163$ \\
    $14$   & $1.095$Ê& $.162$ \\
    $16$   & $1.096$ & $.175$ \\
    $18$   & $1.099$ & $.190$ \\
    $20$   & $1.094$ & $.198$ \\
    $22$   & $1.093$ & $.202$ \\
    $50$   & $1.071$ & $.246$ \\
    $100$ & $.976$   & $.370$ \\
    $200$ & $.944$   & $.471$ \\
    $600$ & $.751$   & $.593$ \\
\bottomrule
\end{tabular} 
\end{center}
\caption{Means and standard deviations of the variance estimates reported in Figure~\ref{fig:boxplots:lin:Gauss}. The reference value is $1.102$} 
\label{tab:lags:means:stds:lin:Gauss}
\end{table}

Finally, using the lag $\lag = 18$ extracted from the previous simulation, Algorithm~\ref{alg:fixed-lag:SMC} was re-run $150$ times on the same observation record $\chunk{y}{0}{n - 1}$, $n = 600$, each run producing a sequence $\{ \predpart[\chunk{y}{0}{m - 1}](\operatorname{id}) \}_{m = 0}^n$ of particle predictor means, a sequence $\{ \varest{\chunk{y}{0}{m - 1}}[\lagtime{m}{\lag}](\operatorname{id}) \}_{m = 0}^n$ of fixed-lag variance estimates, and associated approximate $95\%$ confidence intervals $\{ I_m \}_{m = 0}^n$, each interval given by    
\begin{equation} \label{eq:confidence:bound}
I_m = \left(\predpart[\chunk{y}{0}{m - 1}] \operatorname{id} \pm \lambda_{.025} \frac{\varest{\chunk{y}{0}{m - 1}}[\lagtime{m}{\lag}](\operatorname{id})}{\sqrt{\N}} \right), 
\end{equation}
where $\lambda_{.025}$ denotes the $2.5\%$ quantile of the standard Gaussian distribution. As before, $\N$ was set to $4000$. In the case of effective variance estimation, one may expect each $I_m$ to fail to cover the true predicted mean $\pred[\chunk{y}{0}{m - 1}](\operatorname{id})$ with a probability close to $5\%$. Since we are in the setting of a linear Gaussian model, the exact predictor means $\{ \pred[\chunk{y}{0}{m - 1}](\operatorname{id}) \}_{m = 0}^n$ are accessible through Kalman prediction, and we are thus able to assess the failure rates of the confidence intervals at different time points. Such failure rates are reported in Figure~\ref{fig:stemplot:FRs}, and the red-dashed line indicates the perfect rate $5\%$. For readability, only every $10^{\mathrm{th}}$ time point is reported. Appealingly, it is clear that the rates fluctuate constantly around $5\%$, without any notable time dependence. This re-confirms the numeric stability of our fixed-lag variance estimator. The average failure rate across all $600$ time points was $5.5\%$. The fact that the average failure rate is slightly above the perfect rate $5\%$ is in line with the fact that the bias of our estimator is always positive, as underestimation of variance leads to more narrow confidence bounds and, consequently, higher failure rates. One way of hedging against underestimation of variance could be to replace Gaussian quantiles by the quantiles of some Student's $t$-distribution with a moderate number of degrees of freedom.  

\begin{figure}[H] 
%\vspace{-25mm}
\centering
\includegraphics[width=0.8\textwidth]{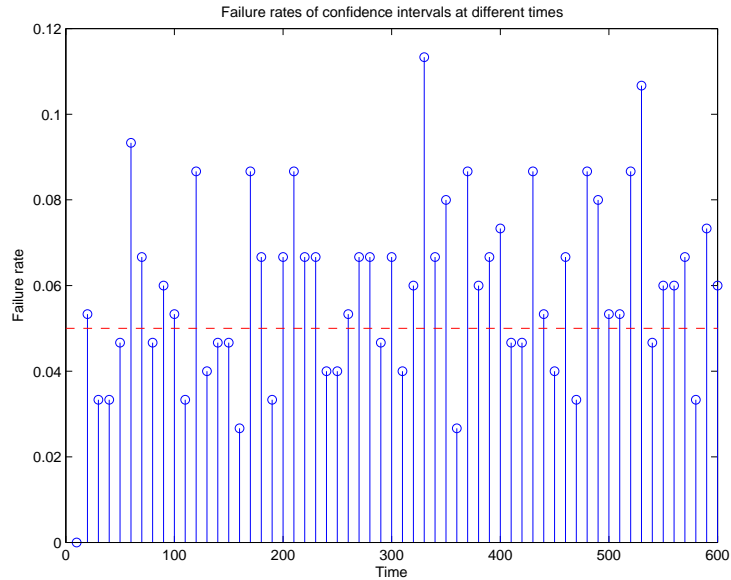} % Data in fixed_lag_lin_Gauss_161220.mat
%\vspace{-30mm}
\caption{Failure rates of confidence bounds \eqref{eq:confidence:bound} at time points $m \in \{10, 20, 30, \ldots, 600\}$. The red-dashed line indicates the perfect rate $5\%$. The failure rate estimates are based on $150$ runs of Algorithm~\ref{alg:fixed-lag:SMC} with $\N = 4000$ particles.}
\label{fig:stemplot:FRs}
\end{figure}

% Conclusion 
\section{Conclusion}
\label{sec:conclusion}

The estimator of the SMC asymptotic variance that we propose is a natural modification of the CLE introduced in \cite{chan:lai:2013}. As in \cite{kitagawa:sato:2001,olsson:cappe:douc:moulines:2006}, the main idea is to reduce the degree of genealogical tracing, which has a devastating effect on the CLE's numerical stability, at the cost of a small bias, which may be controlled using the forgetting properties of the model. That this measure stabilises numerically the estimator in the long run is confirmed by our theoretical results in Section~\ref{sec:theoretical:results}, which are obtained under---what we believe---minimal model assumptions being satisfied also for many models with possibly non-compact state space. The fact that the bias can be shown to decrease geometrically fast as the lag increases indicates that tight control of the bias is possible also for moderately large particle sample sizes. This is approved by our numerical experiments in Section~\ref{sec:numerical:study}, which report, in the examples under consideration, a negligible bias already for some thousands of particles.

Needless to say, the success of our approach depends highly on the interplay between the forgetting properties of the model, the particle sample size, and the choice of the lag. Adaptive lag design is hence a natural direction for future research. Moreover, as our estimator provides numerically stable estimates of the asymptotic variance, it should be highly useful for online SMC sample size adaptation. Here one natural approach could be to estimate the variance of the next time step using a part of the particle population (\emph{pilot sampling}) and then ``refuel'' the particle system at time steps of high variance (here the techniques developed in \cite{lee:whiteley:2016}, where the authors consider the SMC sample allocation problem in the batch mode, should be useful for the theoretical analysis). Finally, casting, using the results obtained in \cite{lindsten:shoen:olsson:2011}, our estimator and analysis into the framework of \emph{Rao-Blackwellised SMC algorithms}, should be of high relevance for high-dimensional applications.

\appendix

\section{Proofs}
\label{sec:proofs}
\subsection{Proof of Proposition~\ref{prop:consistency:fixed:lag}}
\label{sec:proof:consistency:fixed:lag}

The proof of Proposition~\ref{prop:consistency:fixed:lag} relies on the machinery developed in \cite{lee:whiteley:2016}, from which we adopt the following definitions. Throughout this section, 
let $n \in \nset$ and $\chunk{z}{0}{n - 1} \in \Zsp^n$ be picked arbitrarily.  
\begin{itemize}
\item Denote by $\Binsp{n} \eqdef \{0, 1\}^{n + 1}$ the space of binary strings of length $n + 1$. The \emph{zero string} of length $n + 1$ is denoted by $\zerostr{n}$ and for $m \in \intvect{0}{n}$, $\unitstr{m}{n}$ denotes a \emph{unit string} of length $n + 1$ with $1$ on position $m$ (with positions indexed from $0$) and zeros everywhere else.  
\item For a given string $\chunk{b}{0}{n} \in \Binsp{n}$, a Markov chain $\{Ê(X_m, X_m') \}_{m = 0}^n$ on $(\Xsp^2, \Xfd^{\varotimes 2})$ is defined as follows. If $b_0 = 0$, then $(X_0, X_0') \sim \init^{\varotimes 2}$; otherwise, if $b_0 = 1$, $X_0' = X_0 \sim \init$ (the initial distribution). After this, if $b_{m + 1} = 0$, $X_{m + 1} \sim \mk(X_m, \cdot)$Ê and $X_{m + 1}' \sim \mk(X_m', \cdot)$ conditionally independently; otherwise, if $b_{m + 1} = 1$, $X_{m + 1}' = X_{m + 1} \sim \mk(X_m, \cdot)$. 
\item With $\E_{\chunk{b}{0}{n}}$ denoting the expectation under the law of $\{Ê(X_m, X_m') \}_{m = 0}^n$, we define, for all $\chunk{b}{0}{n} \in \Binsp{n}$, the measures 
$$
\mu_{\chunk{b}{0}{n}} \langle \chunk{z}{0}{n - 1} \rangle : \Xfd^{\varotimes 2} \ni A \mapsto 
\E_{\chunk{b}{0}{n}} \left[ \1_A(X_n, X_n') \prod_{m = 0}^{n - 1} \pot[z_m](X_m) \pot[z_m](X_m') \right]. 
$$ 
Note that for all $h \in \bmf{\Xfd}$ it holds that $\mu_{\zerostr{n}} \langle \chunk{z}{0}{n - 1} \rangle h^{\varotimes 2} = (\init \uk[ \chunk{z}{0}{n - 1}] h)^2$ and $\mu_{\unitstr{m}{n}} \langle \chunk{z}{0}{n - 1} \rangle h^{\varotimes 2} = \init \uk[\chunk{z}{0}{m - 1}] \1_\Xsp \times \init \uk[\chunk{z}{0}{m - 1}](\uk[\chunk{z}{m}{n - 1}] h)^2$, and defining 
$$
\term[\chunk{z}{0}{n - 1}]{m}{n} : \bmf{\Xfd} \ni h \mapsto \frac{\mu_{\unitstr{m}{n}} \langle \chunk{z}{0}{n - 1} \rangle h^{\varotimes 2} - \mu_{\zerostr{n}} \langle \chunk{z}{0}{n - 1} \rangle h^{\varotimes 2}}{(\init \uk[\chunk{z}{0}{n - 1}] \1_\Xsp)^2}
$$
yields for all $h \in \bmf{\Xfd}$, 
$$
\term[\chunk{z}{0}{n - 1}]{m}{n}(h) = \frac{\pred[\chunk{z}{0}{m - 1}] (\uk[\chunk{z}{m}{n - 1}] h)^2}{(\pred[\chunk{z}{0}{m - 1}] \uk[\chunk{z}{m}{n - 1}] \1_{\Xsp})^2} - (\pred[\chunk{z}{0}{n - 1}] h)^2 
$$
and, consequently, for all $\ell \in \intvect{0}{n}$, 
\begin{equation} \label{eq:variance:alt:expression}
\variance[2]{\chunk{z}{0}{n - 1}}[\ell](h) = \sum_{m = \ell}^n \term[\chunk{z}{0}{n - 1}]{m}{n} (h - \pred[\chunk{z}{0}{n - 1}] h). 
\end{equation}
\item For all $\N \in \nsetpos$, let Ê$\partfd{n} \eqdef \sigma( \{ \epart{0}{i} \}_{i = 1}^\N, \{Ê\epart{m}{i}, \ind{m}{i} \}_{i = 1}^\N ; m \in \intvect{1}{n} )$ be the $\sigma$-field generated by the output of Algorithm~\ref{alg:SMC} during the first $n$ iterations. Conditionally on $\partfd{n}$, a genealogical trace $\chunk{\gen[1]}{0}{n}$ is formed backwards in time by, first, drawing $\gen[1]_n$ uniformly over $\intvect{1}{\N}$ and, second, setting $\gen[1]_m = \ind{m + 1}{\gen[1]_{m + 1}}$ for all $m \in \intvect{0}{n - 1}$. In addition, a parallel trace $\chunk{\gen[2]}{0}{n}$ is formed by, first, drawing $\gen[2]_n$ uniformly over $\intvect{1}{\N}$ and, second, letting $\gen[2]_m = \ind{m + 1}{\gen[2]_{m + 1}}$ if $\gen[2]_{m + 1} \neq \gen[1]_{m + 1}$ or $\gen[2]_m \sim \cat(\{ \wgt{m}{i} / \wgtsum{m} \}_{i = 1}^\N)$ otherwise. 
\end{itemize}

The proof of the following lemma follows closely that of \cite[Lemma~4]{lee:whiteley:2016} and is hence omitted. Define for all $\ell \in \intvect{0}{n}$ and $\chunk{b}{0}{\ell} \in \Binsp{\ell}$, 
$$
\binset{\ell}(\chunk{b}{0}{\ell}) \eqdef \left \{Ê(\chunk{k}{0}{\ell}, \chunk{k'}{0}{\ell}) \in \intvect{1}{\N}^{2(\ell + 1)} : \mbox{for all } \ell' \in \intvect{0}{\ell}, \ k_{\ell'} = k'_{\ell'} \Leftrightarrow b_{\ell'} = 1 \right \}. 
$$
\begin{lemma} \label{lemma:equiv:sets}
For all $\N \in \nsetpos$ and $m \in \intvect{0}{n}$, 
$$
\left \{ \enoch{m}{n}{\gen[1]_n} \neq \enoch{m}{n}{\gen[2]_n} \right \} = \left \{ (\chunk{\gen[1]}{m}{n}, \chunk{\gen[2]}{m}{n}) \in \binset{n - m}(0_{n - m}) \right \}. 
$$ 
\end{lemma}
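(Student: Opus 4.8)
The plan is to translate both events into statements about the two index traces $\chunk{\gen[1]}{m}{n}$ and $\chunk{\gen[2]}{m}{n}$ and then to exploit the tree structure of the particle genealogy. As a preliminary step I would record, by a one-line backward induction on the definition \eqref{eq:def:Enoch}, the recursion $\enoch{t - 1}{n}{i} = \ind{t}{\enoch{t}{n}{i}}$ valid for all $t \in \intvect{1}{n}$ and $i \in \intvect{1}{\N}$. Since the trace $\gen[1]$ is built by the same parent-index rule $\gen[1]_{t} = \ind{t + 1}{\gen[1]_{t + 1}}$ and $\enoch{n}{n}{i} = i$, this immediately yields $\gen[1]_t = \enoch{t}{n}{\gen[1]_n}$ for every $t \in \intvect{m}{n}$; that is, the trace $\gen[1]$ is literally the ancestral line of $\gen[1]_n$. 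Finally, unwinding the definition of $\binset{n - m}(0_{n - m})$ shows that the right-hand event is nothing but $\{ \gen[1]_t \neq \gen[2]_t \text{ for all } t \in \intvect{m}{n} \}$, so the lemma becomes a comparison between this trace-level disagreement and the ancestral disagreement $\{ \enoch{m}{n}{\gen[1]_n} \neq \enoch{m}{n}{\gen[2]_n} \}$.

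For the inclusion $\supseteq$ I would observe that the categorical redraw in the construction of $\gen[2]$ is triggered at a step $t + 1 \to t$ only when $\gen[2]_{t + 1} = \gen[1]_{t + 1}$. Hence, on the event that $\gen[1]_s \neq \gen[2]_s$ for all $s \in \intvect{m}{n}$, the redraw is never used between times $n$ and $m$, so $\gen[2]_t = \ind{t + 1}{\gen[2]_{t + 1}}$ throughout and the same backward induction as above gives $\gen[2]_t = \enoch{t}{n}{\gen[2]_n}$ for all $t \in \intvect{m}{n}$. Evaluating at $t = m$ and combining $\gen[1]_m = \enoch{m}{n}{\gen[1]_n}$, $\gen[2]_m = \enoch{m}{n}{\gen[2]_n}$ with the disagreement $\gen[1]_m \neq \gen[2]_m$ produces $\enoch{m}{n}{\gen[1]_n} \neq \enoch{m}{n}{\gen[2]_n}$.

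For the reverse inclusion I would argue by contraposition, assuming the two traces agree somewhere on $\intvect{m}{n}$ and setting $t^\ast \eqdef \max\{ t \in \intvect{m}{n} : \gen[1]_t = \gen[2]_t \}$, the first backward coalescence time (so $t^\ast \geq m$). Above $t^\ast$ the traces disagree, so the redraw is not used on $\intvect{t^\ast + 1}{n}$ and the backward induction again gives $\gen[2]_{t^\ast} = \enoch{t^\ast}{n}{\gen[2]_n}$; together with $\gen[1]_{t^\ast} = \enoch{t^\ast}{n}{\gen[1]_n}$ and the coalescence $\gen[1]_{t^\ast} = \gen[2]_{t^\ast}$, this shows that the two genuine ancestral lines already meet at time $t^\ast$. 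The decisive structural input is then the monotonicity of coalescence along the tree: by the recursion $\enoch{t - 1}{n}{i} = \ind{t}{\enoch{t}{n}{i}}$, equality of the ancestors of $\gen[1]_n$ and $\gen[2]_n$ at time $t^\ast$ propagates to all earlier times, and in particular to time $m \leq t^\ast$, whence $\enoch{m}{n}{\gen[1]_n} = \enoch{m}{n}{\gen[2]_n}$.

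The only delicate point is the bookkeeping around the redraw: one must verify that $\gen[2]$ tracks the true genealogy of $\gen[2]_n$ exactly up to and including the first backward coalescence with $\gen[1]$, and---crucially---that the post-coalescence redraws never need to be analysed, because the monotonicity of tree coalescence already forces the ancestors to agree at every time $t \leq t^\ast$ once they agree at $t^\ast$. I expect this interplay between the backward induction and the monotonicity to be the main (albeit routine) obstacle; the remainder is a direct unwinding of the definitions, which is precisely why the argument runs parallel to that of \cite[Lemma~4]{lee:whiteley:2016}.
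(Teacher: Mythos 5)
Your argument is correct: the identity $\enoch{t - 1}{n}{i} = \ind{t}{\enoch{t}{n}{i}}$, the identification of $\gen[1]$ (and of $\gen[2]$ up to the first backward coalescence) with the true ancestral lines, and the propagation of coalescence to all earlier times together give exactly the stated set equality. The paper itself omits the proof, deferring to \cite[Lemma~4]{lee:whiteley:2016}, and your reasoning is precisely the standard argument that reference supplies, so there is nothing to flag.
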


In addition, define, for all $\N \in \nsetpos$, the measures
\begin{equation} \label{eq:def:part:gamma}
\unpredpart[\chunk{z}{0}{n - 1}] : \Xfd \ni A \mapsto \frac{1}{\N^{n + 1}} \left( \prod_{m = 0}^{n - 1} \wgtsum{m} \right) \sum_{i = 1}^\N \1_A(\epart{n}{i})
\end{equation}
and 
\begin{multline} \label{eq:mu:meas}
\mumeas{\N, \chunk{b}{0}{n}}{\chunk{z}{0}{n - 1}} :  \Xfd^{\varotimes 2} \ni A \mapsto 
\N^{\#_1(\chunk{b}{0}{n})} \left( \frac{\N}{\N - 1} \right)^{\#_0(\chunk{b}{0}{n})}
\left( \unpredpart[\chunk{z}{0}{n - 1}] \1_\Xsp \right)^2 \\ 
\times \E \left[ \1_A \left( \epart{n}{\gen[1]_n}, \epart{n}{\gen[2]_n} \right) \1 \left \{Ê(\chunk{\gen[1]}{0}{n}, \chunk{\gen[2]}{0}{n}) \in \binset{n}(\chunk{b}{0}{n}) \right \} \mid \partfd{n} \right],  
\end{multline}
where $\#_1(\chunk{b}{0}{n}) \eqdef \sum_{m = 0}^n b_m$ and $\#_0(\chunk{b}{0}{n}) \eqdef n + 1 - \#_1(\chunk{b}{0}{n})$ denote the numbers of ones and zeros in $\chunk{b}{0}{n}$, respectively. Note that \eqref{eq:def:part:gamma} implies that for all $h \in \bmf{\Xfd}$, $\predpart[\chunk{z}{0}{n - 1}] h = \unpredpart[\chunk{z}{0}{n - 1}] h / \unpredpart[\chunk{z}{0}{n - 1}] \1_\Xsp$. 

The following lemma, where first part is established in \cite[Theorem~2]{lee:whiteley:2016} and the last part is a standard result (see, e.g, \cite{douc:moulines:2008} for results on the weak consistency of SMC), will be instrumental. 

\begin{lemma} \label{lemma:mu:convergence}
For all $\chunk{b}{0}{n} \in \Binsp{n}$ and $h \in \bmf{\Xfd^{\varotimes 2}}$, as $\N \rightarrow \infty$, 
$$
\mumeas{\N, \chunk{b}{0}{n}}{\chunk{z}{0}{n - 1}} h \plim \mumeas{\chunk{b}{0}{n}}{\chunk{z}{0}{n - 1}} h. 
$$
In addition, for all $h \in \bmf{\Xfd}$, 
$$
\unpredpart[\chunk{z}{0}{n - 1}] h \plim \init \uk[\chunk{z}{0}{n - 1}] h. 
$$
\end{lemma}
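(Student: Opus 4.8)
The plan is to treat the two assertions by separate means, since the first is a direct consequence of the genealogical machinery of \cite{lee:whiteley:2016} whereas the second is a classical weak law of large numbers for the unnormalised SMC flow. For the second assertion I would begin by collapsing the unnormalised empirical measure \eqref{eq:def:part:gamma} into product form: since $\wgtsum{m} = \N \predpart[\chunk{z}{0}{m - 1}] \pot[z_m]$ and $\sum_{i = 1}^\N h(\epart{n}{i}) = \N \predpart[\chunk{z}{0}{n - 1}] h$, each of the $n + 1$ powers of $\N$ cancels against $\N^{n + 1}$, leaving
\begin{equation*}
\unpredpart[\chunk{z}{0}{n - 1}] h = \left( \prod_{m = 0}^{n - 1} \predpart[\chunk{z}{0}{m - 1}] \pot[z_m] \right) \predpart[\chunk{z}{0}{n - 1}] h,
\end{equation*}
while by the normalisation recursion \eqref{eq:pred:rec} the candidate limit factorises identically as $\init \uk[\chunk{z}{0}{n - 1}] h = \left( \prod_{m = 0}^{n - 1} \pred[\chunk{z}{0}{m - 1}] \pot[z_m] \right) \pred[\chunk{z}{0}{n - 1}] h$. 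It then suffices to establish $\predpart[\chunk{z}{0}{m - 1}] f \plim \pred[\chunk{z}{0}{m - 1}] f$ for each $m \in \intvect{0}{n}$ and every bounded $f$ (applied with $f = \pot[z_m]$, bounded under \A{assum:majo-g}(\ref{item:bdd:g}), and with $f = h$), and then to combine these finitely many in-probability limits by the continuous mapping theorem for the product. The convergence of $\predpart[\chunk{z}{0}{m - 1}] f$ is the standard weak consistency of the bootstrap filter, proved by induction on $m$ via the mutation--selection decomposition of Algorithm~\ref{alg:SMC}; I would simply invoke \cite{douc:moulines:2008}.

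The first assertion is the substantive one. Here I would identify the present perturbed Feynman-Kac model $(\init, \mk, \{ \pot[z_m] \})$ with the time-inhomogeneous Feynman-Kac model underlying \cite[Theorem~2]{lee:whiteley:2016}, and then verify that the random measure $\mumeas{\N, \chunk{b}{0}{n}}{\chunk{z}{0}{n - 1}}$ is precisely their coalescent-tree estimator associated with the pairing pattern encoded by $\chunk{b}{0}{n}$. Concretely, the conditional expectation in \eqref{eq:mu:meas} integrates out the backward draws of the coupled traces $(\chunk{\gen[1]}{0}{n}, \chunk{\gen[2]}{0}{n})$, restricted to the event that their coalescence profile matches $\chunk{b}{0}{n}$ through membership in $\binset{n}(\chunk{b}{0}{n})$; the combinatorial prefactors $\N^{\#_1(\chunk{b}{0}{n})}$ and $(\N / (\N - 1))^{\#_0(\chunk{b}{0}{n})}$, together with the squared mass $(\unpredpart[\chunk{z}{0}{n - 1}] \1_\Xsp)^2$, are exactly the finite-sample corrections rendering this functional unbiased, and hence consistent, for $\mumeas{\chunk{b}{0}{n}}{\chunk{z}{0}{n - 1}}$. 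Once the dictionary between the two notational systems is fixed, \cite[Theorem~2]{lee:whiteley:2016} delivers the claimed convergence in probability without any further probabilistic estimate.

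The main obstacle, accordingly, is not a fresh concentration argument but the faithful translation of the construction. I would need to check that the backward coupling used here---where the parallel trace $\gen[2]$ follows the genealogy of $\gen[1]$ until their indices coincide and is redrawn from the resampling law $\cat(\{ \wgt{m}{i} / \wgtsum{m} \}_{i})$ precisely at coalescence---reproduces the split-and-recouple mechanism generating the coalescent trees of \cite{lee:whiteley:2016}, and that the bookkeeping in $\binset{n}(\chunk{b}{0}{n})$, $\#_1$, and $\#_0$ aligns with their counting of coalescence events. This is largely definition-chasing, but it is where all the care is required: a miscount of the $\N / (\N - 1)$ factors would alter the limit, so the identification of $\mumeas{\chunk{b}{0}{n}}{\chunk{z}{0}{n - 1}}$ through the coupled chain $\{ (X_m, X_m') \}_{m = 0}^n$ hinges on getting this combinatorial normalisation exactly right.
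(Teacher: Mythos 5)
Your proposal matches the paper's treatment: the paper does not prove this lemma in detail but simply attributes the first assertion to \cite[Theorem~2]{lee:whiteley:2016} and the second to standard weak-consistency results for SMC (citing \cite{douc:moulines:2008}), which is exactly the route you take. Your additional factorisation of $\unpredpart[\chunk{z}{0}{n - 1}] h$ into a product of normalised empirical quantities is a correct and slightly more explicit rendering of the same standard argument.
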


\begin{proof}[Proof of Proposition~\ref{prop:consistency:fixed:lag}]
Fix $\ell \in \intvect{0}{n}$ and define for all $\N \in \nsetpos$,
\begin{multline*}
\varphi_{\N, \ell} \langle \chunk{z}{0}{n - 1} \rangle : \Xfd^{\varotimes 2} \ni A \mapsto  \left( \unpredpart[\chunk{z}{0}{n - 1}] \1_\Xsp \right)^2 \\Ê
\times \E \left[ \1_A \left( \epart{n}{\gen[1]_n}, \epart{n}{\gen[2]_n} \right) \1 \left \{Ê(\chunk{\gen[1]}{\ell}{n}, \chunk{\gen[2]}{\ell}{n}) \in \binset{n - \ell}(0_{n - \ell}) \right \} \mid \partfd{n} \right]. 
\end{multline*}
First, note that by Lemma~\ref{lemma:equiv:sets}, since $\gen[1]_n$ and $\gen[2]_n$ are conditionally independent and uniformly distributed over $\intvect{1}{\N}$, for all $h \in \bmf{\Xfd}$,  
\begin{equation} \label{eq:estimator:alt:expression}
\begin{split} 
\frac{1}{\N^2} \sum_{i = 1}^\N \left( \sum_{j : \enoch{\ell}{n}{j} = i} h(\epart{n}{j}) \right)^2 
&= \frac{1}{\N^2} \sum_{(i, j) : \enoch{\ell}{n}{i} = \enoch{\ell}{n}{j}} h(\epart{n}{i}) h(\epart{n}{j}) \\
&= (\predpart[\chunk{z}{0}{n - 1}] h)^2 - \frac{1}{\N^2} \sum_{(i, j) : \enoch{\ell}{n}{i} \neq \enoch{\ell}{n}{j}} h(\epart{n}{i}) h(\epart{n}{j}) \\
&= (\predpart[\chunk{z}{0}{n - 1}] h)^2 - \frac{\varphi_{\N, \ell} \langle \chunk{z}{0}{n - 1} \rangle h^{\varotimes 2}}{(\unpredpart[\chunk{z}{0}{n - 1}] \1_\Xsp)^2}.  
\end{split}
\end{equation}
It is hence enough to prove that for all $\N \in \nsetpos$ and $h \in \bmf{\Xfd}$, 
\begin{multline} \label{eq:sufficient:condition}
\N \left \{ (\unpredpart[\chunk{z}{0}{n - 1}] h)^2 - \varphi_{\N, \ell} \langle \chunk{z}{0}{n - 1} \rangle h^{\varotimes 2} \right\} \\Ê
= %\frac{1}{\N} 
\sum_{m = \ell}^n \left( \mumeas{\N, \unitstr{m}{n}}{\chunk{z}{0}{n - 1}} h^{\varotimes 2} - \mumeas{\N, \zerostr{n}}{\chunk{z}{0}{n - 1}} h^{\varotimes 2} \right) + (n - \ell + 1) (\unpredpart[\chunk{z}{0}{n - 1}] h)^2 \\Ê
+ \| h \|_\infty^2 \ordo(\N^{-1}),
\end{multline}
where the $\ordo(\N^{-2})$ term does not depend on $h$; indeed, along the lines of the proof of \cite[Theorem~1]{lee:whiteley:2016}, Lemma~\ref{lemma:mu:convergence} implies that for all $\chunk{b}{0}{n} \in \Binsp{n}$, as $\N \rightarrow \infty$, 
$$
\mumeas{\N, \chunk{b}{0}{n}}{\chunk{z}{0}{n - 1}} \{Êh - \predpart[\chunk{z}{0}{n - 1}] h \}^{\varotimes 2} \plim \mumeas{\chunk{b}{0}{n}}{\chunk{z}{0}{n - 1}} \{Êh - \pred[\chunk{z}{0}{n - 1}] h \}^{\varotimes 2}, 
$$
and \eqref{eq:sufficient:condition} hence yields, with $\ell = \lagtime{n}{\lag}$ Êand when combined with \eqref{eq:estimator:alt:expression} and \eqref{eq:variance:alt:expression}, again as $\N \rightarrow \infty$, 
\begin{multline} \label{eq:critical:identity}
\varest[2]{\chunk{z}{0}{n - 1}}[\lag](h) \\
= \sum_{m = \lagtime{n}{\lag}}^n \frac{\mumeas{\N, \unitstr{m}{n}}{\chunk{z}{0}{n - 1}} \{Êh - \predpart[\chunk{z}{0}{n - 1}] h \}^{\varotimes 2} - \mumeas{\N, 0_n}{\chunk{z}{0}{n - 1}} \{Êh - \predpart[\chunk{z}{0}{n - 1}] h \}^{\varotimes 2}}{(\unpredpart[\chunk{z}{0}{n - 1}] \1_\Xsp)^2} \\
+ \| h \|_\infty^2 \ordo(\N^{-1}) \plim \variance[2]{\chunk{z}{0}{n - 1}}[\lagtime{n}{\lag}](h). 
\end{multline}
In order to establish \eqref{eq:sufficient:condition}, write, using that $\gen[1]_n$ and $\gen[2]_n$ are, given $ \partfd{n}$, conditionally independent and uniformly distributed over $\intvect{1}{\N}$,   
\begin{align} \label{eq:estimator:difference:form}
\lefteqn{(\unpredpart[\chunk{z}{0}{n - 1}] h)^2 - \varphi_{\N, \ell} \langle \chunk{z}{0}{n - 1} \rangle h^{\varotimes 2}} \nonumber \\ 
&= (\unpredpart[\chunk{z}{0}{n - 1}] \1_\Xsp)^2 \sum_{\chunk{b}{0}{n} \in \Binsp{n}} \E \left[ h\big( \epart{n}{\gen[1]_n} \big) h \big( \epart{n}{\gen[2]_n} \big) \1 \left \{Ê(\chunk{\gen[1]}{0}{n}, \chunk{\gen[2]}{0}{n}) \in \binset{n}(\chunk{b}{0}{n}) \right \} \mid \partfd{n} \right] \nonumber \\ 
&- (\unpredpart[\chunk{z}{0}{n - 1}] \1_\Xsp)^2 \sum_{\chunk{b}{0}{n} \in \Binsp{n} : \chunk{b}{\ell}{n} = 0_{n - \ell}} \E \left[ h\big( \epart{n}{\gen[1]_n} \big) h \big( \epart{n}{\gen[2]_n} \big) \1 \left \{Ê(\chunk{\gen[1]}{0}{n}, \chunk{\gen[2]}{0}{n}) \in \binset{n}(\chunk{b}{0}{n}) \right \} \mid \partfd{n} \right]
\end{align}
and note that, by definition \eqref{eq:mu:meas}, 
\begin{align} \label{eq:estimator:first:term}
\lefteqn{(\unpredpart[\chunk{z}{0}{n - 1}] \1_\Xsp)^2 \sum_{\chunk{b}{0}{n} \in \Binsp{n}} \E \left[ h \big( \epart{n}{\gen[1]_n} \big) h \big( \epart{n}{\gen[2]_n} \big) \1 \left \{Ê(\chunk{\gen[1]}{0}{n}, \chunk{\gen[2]}{0}{n}) \in \binset{n}(\chunk{b}{0}{n}) \right \} \mid \partfd{n} \right]} \nonumber \\Ê
&= \frac{1}{\N} \sum_{m = 0}^n \mumeas{\N, \unitstr{m}{n}}{\chunk{z}{0}{n - 1}} h^{\varotimes 2} + \left( 1 - \frac{1}{\N} \right)^{n + 1} \mumeas{\N, 0_n}{\chunk{z}{0}{n - 1}} h^{\varotimes 2} + \| h \|_\infty^2 \ordo(\N^{-2}) \nonumber \\
&= \frac{1}{\N} \sum_{m = 0}^n \left( \mumeas{\N, \unitstr{m}{n}}{\chunk{z}{0}{n - 1}} h^{\varotimes 2}  - \mumeas{\N, \zerostr{n}}{\chunk{z}{0}{n - 1}} h^{\varotimes 2} \right) + \mumeas{\N, 0_n}{\chunk{z}{0}{n - 1}} h^{\varotimes 2} + \| h \|_\infty^2 \ordo(\N^{-2}). 
\end{align}
Similarly, 
\begin{multline} \label{eq:estimator:second:term}
\lefteqn{(\unpredpart[\chunk{z}{0}{n - 1}] \1_\Xsp)^2 \sum_{\chunk{b}{0}{n} \in \Binsp{n} : \chunk{b}{\ell}{n} = 0_{n - \ell}} \E \left[ h\big( \epart{n}{\gen[1]_n} \big) h \big( \epart{n}{\gen[2]_n} \big) \1 \left \{Ê(\chunk{\gen[1]}{0}{n}, \chunk{\gen[2]}{0}{n}) \in \binset{n}(\chunk{b}{0}{n}) \right \} \mid \partfd{n} \right]} \\
= \frac{1}{\N} \sum_{m = 0}^{\ell - 1} \left( \mumeas{\N, \unitstr{m}{n}}{\chunk{z}{0}{n - 1}} h^{\varotimes 2}  - \mumeas{\N, \zerostr{n}}{\chunk{z}{0}{n - 1}} h^{\varotimes 2} \right) + \mumeas{\N, 0_n}{\chunk{z}{0}{n - 1}} h^{\varotimes 2} \\Ê
- \frac{n - \ell + 1}{\N} \mumeas{\N, 0_n}{\chunk{z}{0}{n - 1}} h^{\varotimes 2} + \| h \|_\infty^2 \ordo(\N^{-2}), 
\end{multline}
and combining \eqref{eq:estimator:difference:form}, \eqref{eq:estimator:first:term}, \eqref{eq:estimator:second:term}, and the fact that $\mumeas{0_n}{\chunk{z}{0}{n - 1}} h^{\varotimes 2} = (\init \uk[\chunk{z}{0}{n - 1}] h)^2$ yields \eqref{eq:sufficient:condition}. This completes the proof. 
\end{proof}

\subsection{Proof of Theorem~\ref{thm:tightness:bias}}

In the proof of Theorem~\ref{thm:tightness:bias}, the asymptotic bias is bounded using the time-shift approach taken in \cite[Theorem~10]{douc:moulines:olsson:2014}. Even though the theoretical analysis provided in \cite{douc:moulines:olsson:2014} is cast into the framework of general state-space models, it never makes use of the fact that $\pot$ is a normalised transition density. As stressed in \cite[Remark~1]{douc:moulines:olsson:2014}, it is hence directly applicable to the framework of randomly perturbed Feynman-Kac models in Section~\ref{sec:Feynman:Kac:models}. 

\begin{proof}
Pick arbitrarily $n \in \nset$, $\lag \in \nset$, $h \in \bmf{\Xfd}$, and $\init \in \mdr(D, r)$. 
By defining, for all $m \in \nset$, $\ell \in \intvect{0}{m - 1}$, $\chunk{z}{\ell}{m - 1} \in \Zsp^{m - \ell}$, and measures $(\mu, \mu') \in \probmeas{\Xfd}^2$, 
\begin{multline} \label{eq:def-Delta}
\Delta_{\mu, \mu'} \langle \chunk{z}{\ell}{m - 1} \rangle : \bmf{\Xfd}^2 \ni (h, h') \mapsto 
\mu \uk[\chunk{z}{\ell}{m - 1}] h \times \mu' \uk[\chunk{z}{\ell}{m - 1}] h'
\\ - \mu \uk[\chunk{z}{\ell}{m - 1}] h' \times \mu' \uk[\chunk{z}{\ell}{m - 1}] h, 
\end{multline}
we may, using the identity 
$$
\pred[\chunk{\per}{0}{m - 1}] \uk[\chunk{\per}{m}{n - 1}] \1_{\Xsp} = \frac{\init\uk[\chunk{\per}{0}{n - 1}] \1_{\Xsp}}{\init\uk[\chunk{\per}{0}{m - 1}] \1_{\Xsp}} = \prod_{\ell = m}^{n - 1} \frac{\init\uk[\chunk{\per}{0}{\ell}] \1_{\Xsp}}{\init\uk[\chunk{\per}{0}{\ell - 1}] \1_{\Xsp}} = \prod_{\ell = m}^{n - 1} \pred[\chunk{\per}{0}{\ell - 1}]  \pot[\per_\ell],
$$
for all $m \in \intvect{0}{n}$,
write the asymptotic bias at time $n$ as
\begin{equation} \label{bias:alternative:form}
\bias{\chunk{\per}{0}{n - 1}}{\lag}(h) 
= \sum_{m = 0}^{\lagtime{n}{\lag} - 1} \int \pred[\chunk{\per}{0}{m - 1}](\rmd x) \left( \frac{\DDelta{\delta_x,\pred[\chunk{\per}{0}{m - 1}]}{\chunk{\per}{m}{n - 1}}{h, \1_{\Xsp}}}
{[\prod_{\ell = m}^{n - 1} \pred[\chunk{\per}{0}{\ell - 1}]  \pot[\per_\ell]]^2} \right)^2. 
\end{equation}
In addition, under the assumptions of the theorem, \cite[Proposition~1]{douc:moulines:2012} provides the existence of a function $\pi: \Zsp^\infty \to \rset$ such that for all initial distributions $\init \in \mdr(D, r)$,
$$
\lim_{m \to \infty} \pred[\chunk{\per}{-m}{- 1}] \pot[\per_0] = \limitfunc{\chunk{\per}{- \infty}{0}}, \quad \prob\mbox{-a.s.}
$$
Since the perturbations $\{ \per_n \}_{n \in \zset}$ are stationary, the distribution of $\bias{\chunk{\per}{0}{n - 1}}{\lag}(h)$ coincides with that of the time-shifted bias $\bias{\chunk{\per}{-n}{- 1}}{\lag}(h)$, and a key step in the present proof is to express, via \eqref{bias:alternative:form}, the latter as 
\begin{multline*}
\bias{\chunk{\per}{-n}{- 1}}{\lag}(h) = \\Ê
\sum_{m = 0}^{\lagtime{n}{\lag} - 1} \int \pred[\chunk{\per}{-n}{- n + \lagtime{n}{\lag} - m - 2}](\rmd x) \left( \frac{\DDelta{\delta_x,\pred[\chunk{\per}{-n}{- n + \lagtime{n}{\lag} - m - 2}]}{\chunk{\per}{- n + \lagtime{n}{\lag} - m - 1}{-1}}{h, \1_{\Xsp}}}{[\prod_{\ell = 1}^{n - \lagtime{n}{\lag} + m + 1} \pred[\chunk{\per}{-n}{- \ell - 1}]  \pot[\per_{- \ell}]]^2 } \right)^2.  
\end{multline*}
We hence obtain the bound 
\begin{equation} \label{eq:fundamental:bias:bound}
\bias{\chunk{\per}{-n}{-1}}{\lag}(h) \leq A_n \times B_n,
\end{equation}
where
\begin{align*} 
A_n &\eqdef \left(\sup_{(k, m) \in \zset^2 : \, -n \leq k \leq m} \prod_{\ell = k}^{m - 1} \frac{\limitfunc{\chunk{\per}{-\infty}{\ell}}}{\pred[\chunk{\per}{-n}{\ell - 1}] \pot[\per_\ell]} \right)^4, \\ 
B_n &\eqdef \sum_{m = 0}^{\lagtime{n}{\lag} - 1} \left( \frac{\sup_{x \in \Xsp} |\DDelta{\delta_x, \pred[\chunk{\per}{-n}{- n + \lagtime{n}{\lag} - m - 2}]}{\chunk{\per}{- n + \lagtime{n}{\lag} - m - 1}{-1}}{h, \1_\Xsp}|}{[\prod_{\ell = 1}^{n - \lagtime{n}{\lag} + m + 1} \limitfunc{\chunk{\per}{-\infty}{- \ell}}]^2} \right)^2.
\end{align*}
To bound uniformly the sequence $\{ B_n \}_{n \in \nset}$, decompose each term according to 
\begin{multline} \label{eq:termwise:decomposition:Bn}
\frac{\sup_{x \in \Xsp} |\DDelta{\delta_x, \pred[\chunk{\per}{-n}{- n + \lagtime{n}{\lag} - m - 2}]}{\chunk{\per}{- n + \lagtime{n}{\lag} - m - 1}{-1}}{h, \1_\Xsp}|}{[\prod_{\ell = 1}^{n - \lagtime{n}{\lag} + m + 1} \limitfunc{\chunk{\per}{-\infty}{- \ell}}]^2} \\
= \left( \frac{\| \uk[\chunk{\per}{- n + \lagtime{n}{\lag} - m - 1}{-1}] \1_\Xsp \|_\infty}{\prod_{\ell = 1}^{n - \lagtime{n}{\lag} + m + 1} \limitfunc{\chunk{\per}{-\infty}{- \ell}}} \right)^2 \\
\times \frac{\sup_{x \in \Xsp} |\DDelta{\delta_x, \pred[\chunk{\per}{-n}{- n + \lagtime{n}{\lag} - m - 2}]}{\chunk{\per}{- n + \lagtime{n}{\lag} - m - 1}{-1}}{h, \1_\Xsp}|}{\| \uk[\chunk{\per}{- n + \lagtime{n}{\lag} - m - 1}{-1}] \1_\Xsp \|_\infty^2}. 
\end{multline}
We consider separately the two factors of \eqref{eq:termwise:decomposition:Bn}. First,
\begin{equation} \label{eq:first:factor:Bn:deomposition}
\left( \frac{\| \uk[\chunk{\per}{- n + \lagtime{n}{\lag} - m - 1}{-1}] \1_\Xsp \|_\infty}{\prod_{\ell = 1}^{n - \lagtime{n}{\lag} + m + 1} \limitfunc{\chunk{\per}{-\infty}{- \ell}}} \right)^2 = \exp\{Ê(n - \lagtime{n}{\lag} + m + 1) \varepsilon_{n - \lagtime{n}{\lag} + m + 1} \}, 
\end{equation}
with
$$
\varepsilon_k \eqdef \frac{2}{k} \left( \ln \| \uk[\chunk{\per}{-k}{-1}] \1_\Xsp \|_\infty - \sum_{\ell = 1}^k \ln  \limitfunc{\chunk{\per}{-\infty}{- \ell}} \right)
$$
being independent of $\init$ for all $k \in \nsetpos$.  
By \cite[Lemma~17]{douc:moulines:olsson:2014}, $\varepsilon_k \to 0$, $\prob$-a.s., as $k \to \infty$, which implies that \eqref{eq:first:factor:Bn:deomposition} grows at most subgeometrically fast with $m$.  In addition, by \cite[Proposition~16(iii)]{douc:moulines:olsson:2014} there exists a constant $\rho \in (0, 1)$ and a $\prob$-a.s. finite random variable $D$ such that for all $n$ and $m$, all $h \in \bmf{\Xfd}$, and all $\init \in \probmeas{\Xfd}$, $\prob$-a.s,
$$
\frac{\sup_{x \in \Xsp} |\DDelta{\delta_x, \pred[\chunk{\per}{-n}{- n + \lagtime{n}{\lag} - m - 2}]}{\chunk{\per}{- n + \lagtime{n}{\lag} - m - 1}{-1}}{h, \1_\Xsp}|}{\| \uk[\chunk{\per}{- n + \lagtime{n}{\lag} - m - 1}{-1}] \1_\Xsp \|_\infty^2} \leq D \rho^{n - \lagtime{n}{\lag} + m + 1} \| h \|_\infty. 
$$
Thus, $\prob$-a.s,
$$
B_n \leq D^2 \| h \|_\infty^2 \sum_{m = 0}^{\lagtime{n}{\lag} - 1} \rho^{2(n - \lagtime{n}{\lag} + m + 1)} \exp\{Ê2 (n - \lagtime{n}{\lag} + m + 1) \varepsilon_{n - \lagtime{n}{\lag} + m + 1} \}.  
$$
If $n \leq \lag$, then $\lagtime{n}{\lag} = 0$, and the bias vanishes. Thus, we assume in the following that $\lag < n$, which means that $\lagtime{n}{\lag} = n - \lag$. Then, by the Cauchy-Schwartz inequality, $\prob$-a.s,
\begin{align}
B_n &\leq D^2  \| h \|_\infty^2 \sum_{m = \lag + 1}^\infty \rho^{2m} \exp(2m \varepsilon_m) \nonumber \\
&\leq D^2 \| h \|_\infty^2 \left( \sum_{m = \lag + 1}^\infty \rho^{2 m} \right)^{1/2} \left( \sum_{m = \lag + 1}^\infty \rho^{2 m} \exp(4 m \varepsilon_m) \right)^{1/2} \nonumber \\
&\leq D^2 \| h \|_\infty^2 \rho^{\lag + 1} \left( \sum_{m = 0}^\infty \rho^{2 m} \right)^{1/2} \left( \sum_{m = 0}^\infty \rho^{2 m} \exp(4 m \varepsilon_m) \right)^{1/2} \nonumber \\ 
&= C \| h \|_\infty^2 \rho^{\lag + 1}, \label{eq:bias:cauchy:bound}
\end{align}
where random variable  
$$
C \eqdef D^2 \left( \sum_{m = 0}^\infty \rho^{2 m} \right)^{1/2} \left( \sum_{m = 0}^\infty \rho^{2 m} \exp(4 m \varepsilon_m) \right)^{1/2}
$$
is $\prob$-a.s. finite and independent of $\lag$, $h$, and $\init$. For $c \in \rset_+$, write, using \eqref{eq:fundamental:bias:bound} and  \eqref{eq:bias:cauchy:bound}, 
$$
\prob \left( \frac{\bias{\chunk{\per}{0}{n - 1}}{\lag}(h)}{\rate^{\lambda + 1} \| h \|_{\infty}^2} > c \right) = \prob \left( \frac{\bias{\chunk{\per}{-n}{- 1}}{\lag}(h)}{\rate^{\lambda + 1} \| h \|_{\infty}^2} > c \right) \leq \prob \left( A_n C > c \right),  
$$
where the probability on the right hand side is again independent of $\lag$, $h$, or $\init$. Thus, using the stationarity of $\{ A_n \}_{n \in \nset}$, 
$$
\prob \left( \frac{\bias{\chunk{\per}{0}{n - 1}}{\lag}(h)}{\rate^{\lambda + 1} \| h \|_{\infty}^2} > c \right) \leq \prob \left( A_0 > c^{1/2} \right)  + \prob \left( C > c^{1/2} \right), 
$$ 
where the right hand side does not depend on $n$. Now, the $\prob$-a.s. finiteness of $A_0$Ê was established as a part of the proof of \cite[Theorem~10]{douc:moulines:olsson:2014}. Consequently, as also $C$ is $\prob$-a.s. finite, there exists, for all $k \in \nsetpos$, a constant $c_k \in \rset_+$, independent of $\lag$, $h$, and $\init$, such that the probabilities $\prob( A_0 > c^{1/2}_k)$Ê and $\prob( C > c^{1/2}_k)$ are both bounded by $1/(2 k)$. This  completes the proof. 
\end{proof}

% Proof in the case of flows of updated measures

\subsection{Proof of Proposition~\ref{prop:consistency:fixed:lag:updated:case}} 

\begin{proof}
The proof consists mainly in combining some of the equalities in the proof of Proposition~\ref{prop:consistency:fixed:lag} with the identity     
\begin{multline} \label{eq:function:prod:identity}
\{Ê\pot[z_n] (h - \filtpart[\chunk{z}{0}{n}] h) \}^{\varotimes 2} = (\pot[z_n] h)^{\varotimes 2} - \{ (\pot[z_n] h) \varotimes \pot[z_n] \} \filtpart[\chunk{z}{0}{n}] h \\Ê
- \{ \pot[z_n] \varotimes (\pot[z_n] h) \} \filtpart[\chunk{z}{0}{n}] h + \pot[z_n]^{\varotimes 2} (\filtpart[\chunk{z}{0}{n}] h)^2. 
\end{multline}
More specifically, as it holds that 
\begin{equation} \label{eq:zero:identity}
\predpart[\chunk{z}{0}{n - 1}](\pot[z_n] \{ h - \filtpart[\chunk{z}{0}{n}] h \})= 0,
\end{equation}   
by reusing the equality in \eqref{eq:critical:identity},
\begin{multline} \label{eq:filtering:variance:numerator}
\varest[2]{\chunk{z}{0}{n - 1}}[\lambda](\pot[z_n] \{ h - \filtpart[\chunk{z}{0}{n}] h\}) = \\
\sum_{m = \lagtime{n}{\lag}}^n \frac{\mumeas{\N, \unitstr{m}{n}}{\chunk{z}{0}{n - 1}} \{Ê\pot[z_n] (h - \filtpart[\chunk{z}{0}{n}] h) \}^{\varotimes 2} - \mumeas{\N, 0_n}{\chunk{z}{0}{n - 1}} \{Ê\pot[z_n] (h - \filtpart[\chunk{z}{0}{n}] h) \}^{\varotimes 2}}{(\unpredpart[\chunk{z}{0}{n - 1}] \1_\Xsp)^2} \\
+ \| \pot[z_n] \|_\infty^2 \| h \|_\infty^2 \ordo(\N^{-1}). 
\end{multline}
Now, write, using \eqref{eq:function:prod:identity}, for all $\chunk{b}{0}{n} \in \Binsp{n}$, 
\begin{multline} \label{eq:termwise:decomposition:updated:case}
\mumeas{\N, \chunk{b}{0}{n}}{\chunk{z}{0}{n - 1}} \{Ê\pot[z_n] (h - \filtpart[\chunk{z}{0}{n}] h) \}^{\varotimes 2} =
\mumeas{\N, \chunk{b}{0}{n}}{\chunk{z}{0}{n - 1}} (\pot[z_n] h)^{\varotimes 2} \\
- \mumeas{\N, \chunk{b}{0}{n}}{\chunk{z}{0}{n - 1}} \{ (\pot[z_n] h) \varotimes \pot[z_n] \} \filtpart[\chunk{z}{0}{n}] h 
- \mumeas{\N, \chunk{b}{0}{n}}{\chunk{z}{0}{n - 1}} \{ \pot[z_n] \varotimes (\pot[z_n] h) \} \filtpart[\chunk{z}{0}{n}] h \\
+ \mumeas{\N, \chunk{b}{0}{n}}{\chunk{z}{0}{n - 1}} \pot[z_n]^{\varotimes 2} (\filtpart[\chunk{z}{0}{n}] h)^2. 
\end{multline}
Applying Lemma~\ref{lemma:mu:convergence} to each term of \eqref{eq:termwise:decomposition:updated:case} (note that the second part of Lemma~\ref{lemma:mu:convergence} implies the consistency of the updated particle measures, as 
\begin{equation} \label{eq:particle:filter:consistency}
\filtpart[\chunk{z}{0}{n}] h = \frac{\predpart[\chunk{z}{0}{n - 1}] (\pot[z_n] h)}{\predpart[\chunk{z}{0}{n - 1}] \pot[z_n]} \plim \frac{\init \uk[\chunk{z}{0}{n - 1}] (\pot[z_n] h)}{\init \uk[\chunk{z}{0}{n - 1}] \pot[z_n]} = \filt[\chunk{z}{0}{n}] h
\end{equation} 
when $\N$ tends to infinity, a now classical result) yields, using again \eqref{eq:function:prod:identity}, 
\begin{multline*}
\mumeas{\N, \chunk{b}{0}{n}}{\chunk{z}{0}{n - 1}} \{Ê\pot[z_n] (h - \filtpart[\chunk{z}{0}{n}] h) \}^{\varotimes 2} \plim 
\mumeas{\chunk{b}{0}{n}}{\chunk{z}{0}{n - 1}} (\pot[z_n] h)^{\varotimes 2} \\
- \mumeas{\chunk{b}{0}{n}}{\chunk{z}{0}{n - 1}} \{ (\pot[z_n] h) \varotimes \pot[z_n] \} \filt[\chunk{z}{0}{n}] h 
- \mumeas{\chunk{b}{0}{n}}{\chunk{z}{0}{n - 1}} \{ \pot[z_n] \varotimes (\pot[z_n] h) \} \filt[\chunk{z}{0}{n}] h \\
+ \mumeas{\chunk{b}{0}{n}}{\chunk{z}{0}{n - 1}} \pot[z_n]^{\varotimes 2} (\filt[\chunk{z}{0}{n}] h)^2
= \mumeas{\chunk{b}{0}{n}}{\chunk{z}{0}{n - 1}} \{Ê\pot[z_n] (h - \filt[\chunk{z}{0}{n}] h) \}^{\varotimes 2}, 
\end{multline*}
 as $\N$ tends to infinity. Now, applying the previous limit to \eqref{eq:filtering:variance:numerator} and using \eqref{eq:variance:alt:expression}, \eqref{eq:zero:identity}, and the second part of Lemma~\ref{lemma:mu:convergence} yields 
\begin{multline*}
\varestfilt[2]{\chunk{z}{0}{n}}[\lambda](h) = \frac{\varest[2]{\chunk{z}{0}{n - 1}}[\lambda](\pot[z_n] \{h - \filtpart[\chunk{z}{0}{n}] h\})}{(\predpart[\chunk{z}{0}{n - 1}] \pot[z_n])^2} \\Ê
\plim  \frac{\variance[2]{\chunk{z}{0}{n - 1}}[\lambda](\pot[z_n] \{ h - \filt[\chunk{z}{0}{n}] h \})}{(\pred[\chunk{z}{0}{n - 1}] \pot[z_n])^2} = \filtvariance[2]{\chunk{z}{0}{n}}[\lambda](h)
\end{multline*}
as $\N$ tends to infinity, which completes the proof. 
\end{proof}

\subsection{Proof of Theorem~\ref{thm:tightness:bias:filter}}

\begin{proof}
Pick arbitrarily $n \in \nset$, $\lag \in \nset$, and $h \in \bmf{\Xfd}$. Using the expression of $\filtvariance[2]{\chunk{\per}{0}{n}}(h)$Ê derived in the proof of \cite[Theorem~11]{douc:moulines:olsson:2014}, one may express the bias $\biasfilt{\chunk{\per}{0}{n}}{\lag}(h)$ as 
\begin{equation*} 
\biasfilt{\chunk{\per}{0}{n}}{\lag}(h) = \sum_{m = 0}^{\lagtime{n}{\lag} - 1} \int \pred[\chunk{\per}{0}{m - 1}](\rmd x) \left( \frac{\DDelta{\delta_x,\pred[\chunk{\per}{0}{m - 1}]}{\chunk{\per}{m}{n - 1}}{\pot[\per_n] h, \pot[\per_n]}}
{(\pred[\chunk{\per}{0}{m - 1}] \uk[\chunk{\per}{m}{n - 1}] \1_{\Xsp} )^2} \right)^2,  
\end{equation*}
where $\DDelta{\delta_x,\pred[\chunk{\per}{0}{m - 1}]}{\chunk{\per}{m}{n - 1}}{\pot[\per_n] h, \pot[\per_n]}$ is given by \eqref{eq:def-Delta}. Thus, the proof is finalised by following closely the lines of the proof of Theorem~\ref{thm:tightness:bias} and noting that the statement of \cite[Proposition~16(iii)]{douc:moulines:olsson:2014} still holds true when $h$Ê and $\1_\Xsp$ are replaced by $\pot[\per_n] h$ and $\pot[\per_n]$, respectively. 
\end{proof}

\subsection{Proof of Theorem~\ref{thm:strong:bias:bound}}
\label{sec:proof:strong:bias:bound}

\begin{proof}
If $n \leq \lag$, the bias vanishes by definition; we thus assume that $n > \lag$. Write, for $m \in \intvect{0}{n - \lag}$ and $x \in \Xsp$, 
\begin{multline} \label{eq:strong:bound:decomposition}
\frac{\uk[\chunk{z}{m}{n - 1}](h - \pred[\chunk{z}{0}{n - 1}] h)(x)}{\pred[\chunk{z}{0}{m - 1}] \uk[\chunk{z}{m}{n - 1}] \1_{\Xsp}} \\Ê
= \frac{\delta_x \uk[\chunk{z}{m}{n - 1}] \1_{\Xsp}}{\pred[\chunk{z}{0}{m - 1}] \uk[\chunk{z}{m}{n - 1}] \1_{\Xsp}} \left( \frac{\delta_x \uk[\chunk{z}{m}{n - 1}] h}{\delta_x \uk[\chunk{z}{m}{n - 1}] \1_{\Xsp}} - \frac{\pred[\chunk{z}{0}{m - 1}] \uk[\chunk{z}{m}{n - 1}] h}{\pred[\chunk{z}{0}{m - 1}] \uk[\chunk{z}{m}{n - 1}] \1_{\Xsp}} \right),   
\end{multline}
where \cite[Proposition~4.3.4]{delmoral:2004} bounds uniformly the second factor of \eqref{eq:strong:bound:decomposition} according to 
\begin{equation} \label{eq:strong:bound:second:term}
\left| \frac{\delta_x \uk[\chunk{z}{m}{n - 1}] h}{\delta_x \uk[\chunk{z}{m}{n - 1}] \1_{\Xsp}} - \frac{\pred[\chunk{z}{0}{m - 1}] \uk[\chunk{z}{m}{n - 1}] h}{\pred[\chunk{z}{0}{m - 1}] \uk[\chunk{z}{m}{n - 1}] \1_{\Xsp}} \right| \leq \varrho^{n - m} \| h \|_\infty. 
\end{equation}
To bound the first factor of \eqref{eq:strong:bound:decomposition}, note that
$$
\pred[\chunk{z}{0}{m - 1}] \uk[\chunk{z}{m}{n - 1}] \1_{\Xsp} = \filt[\chunk{z}{0}{m - 1}] \mk (\pot[z_m] \mk \uk[\chunk{z}{m + 1}{n - 1}] \1_{\Xsp}) \\
\geq \potlow \mdlow \mu \uk[\chunk{z}{m + 1}{n - 1}] \1_{\Xsp}
$$
and 
$$
\delta_x \uk[\chunk{z}{m}{n - 1}] \1_{\Xsp} = \pot[z_m](x) \mk \uk[\chunk{z}{m + 1}{n - 1}] \1_{\Xsp}(x) \leq \potup \mdup \mu \uk[\chunk{z}{m + 1}{n - 1}] \1_{\Xsp},
$$
where $(\mdlow, \mdup)$ and $(\potlow, \potup)$ are given in \B{ass:strong:mixing}(\ref{ass:strong:mixing-2}), implying that  
\begin{equation} \label{eq:strong:bound:first:term}
\frac{\delta_x \uk[\chunk{z}{m}{n - 1}] \1_{\Xsp}}{\pred[\chunk{z}{0}{m - 1}] \uk[\chunk{z}{m}{n - 1}] \1_{\Xsp}} \leq \frac{ \potup \mdup}{ \potlow \mdlow}.\end{equation}
Now, using \eqref{eq:strong:bound:second:term} and \eqref{eq:strong:bound:first:term}, proceed like  
\begin{multline*}
\bias{\chunk{z}{0}{n - 1}}{\lag}(h) = \sum_{m = 0}^{n - \lag - 1} \frac{\pred[\chunk{z}{0}{m - 1}] \{ \uk[\chunk{z}{m}{n - 1}](h - \pred[\chunk{z}{0}{n - 1}] h) \}^2 }{(\pred[\chunk{z}{0}{m - 1}] \uk[\chunk{z}{m}{n - 1}] \1_{\Xsp})^2} \\ 
\leq \| h \|_{\infty}^2 \left( \frac{ \potup \mdup}{ \potlow \mdlow} \right)^2  \sum_{m = 0}^{n - \lag - 1} \varrho^{2(n - m)} \leq c \| h \|_{\infty}^2 \varrho^{2(\lag + 1)}, 
\end{multline*}
with $c \eqdef (\potup \mdup / \potlow \mdlow)^2 / (1 - \varrho^2)$, and the proof is complete. 
\end{proof}

\bibliographystyle{plain}
\bibliography{mybibfile}

\end{document}